\definecolor{darkblue}{rgb}{0,0,0.9}
\newtheorem{theorem}{Theorem}
\newtheorem{lemma}{Lemma}
\newtheorem{prop}{Property}
\newcommand{\eq}[1]{(\ref{eq:#1})}
\newcommand{\fig}[1]{Figure~\ref{fig:#1}}
\newcommand{\secRef}[1]{Section~\ref{sec:#1}}
\newcommand{\lem}[1]{Lemma~\ref{lem:#1}}
\newcommand{\nn}{\nonumber}
\newcommand{\sm}{\setminus}
\newcommand{\ssm}{\smallsetminus}
\newcommand{\mc}[1]{\mathcal{#1}}
\title{Generalised Measures of Multivariate Information Content}
\author{Conor Finn and Joseph T.\ Lizier}
\date{\today}
\begin{document}

\maketitle

\begin{abstract}
  The entropy of a pair of random variables is commonly depicted using a Venn diagram.  This
  representation is potentially misleading, however, since the multivariate mutual information can
  be negative.  This paper presents new measures of multivariate information content that can be
  accurately depicted using Venn diagrams for any number of random variables.  These measures
  complement the existing measures of multivariate mutual information and are constructed by
  considering the algebraic structure of information sharing.  It is shown that the distinct ways in
  which a set of marginal observers can share their information with a non-observing third party
  corresponds to the elements of a free distributive lattice.  The redundancy lattice from partial
  information decomposition is then subsequently and independently derived by combining the
  algebraic structures of joint and shared information content.
\end{abstract}

\section{Introduction}
\label{sec:intro}

\begin{figure}[t]
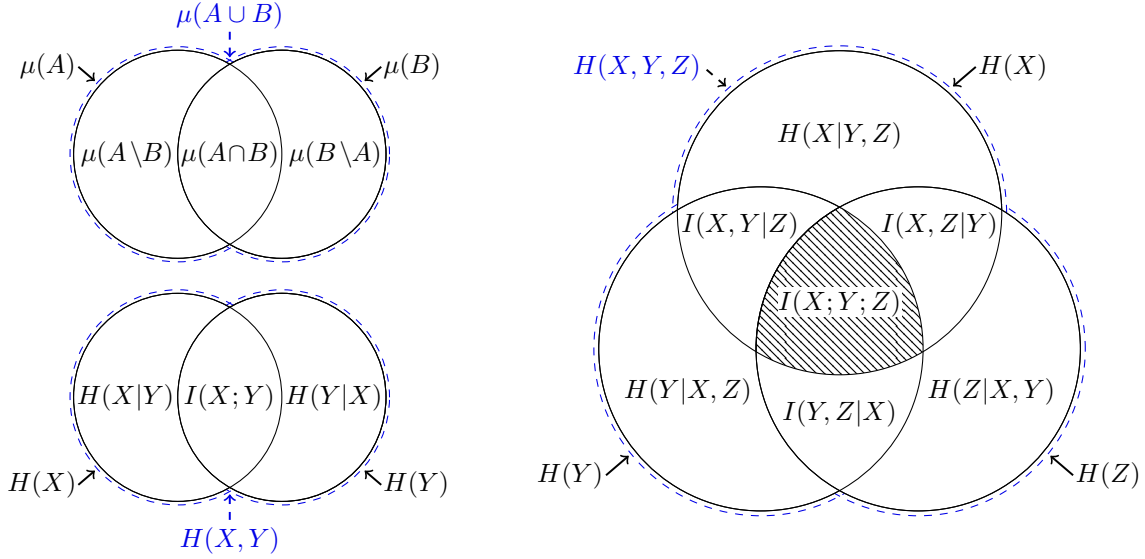

  \centering
  \begin{minipage}{0.37\columnwidth}
    \includegraphics[width=\columnwidth]{figs/two_sets.tikz}
    \vskip -20pt
    \includegraphics[width=\columnwidth]{figs/two_entropies.tikz}
  \end{minipage}
  \hspace{2em}
  \begin{minipage}{0.5\columnwidth}
    \includegraphics[width=1\columnwidth]{figs/three_entropies.tikz}
  \end{minipage}
  \caption{\emph{Top left}: When depicting a measure on the union of two sets $\mu(A \cup B)$, the
    area of each section can be used to represent the inequality \eq{measure_inequalities} and hence
    the values $\mu(A \sm B)$, $\mu(B \sm A)$ and $\mu(A \cap B)$ correspond to the area of each
    section.  This correspondence can be generalised to consider an arbitrary number of sets.
    \emph{Bottom left}: When depicting the joint entropy $H(X,Y)$, the area of each section can also
    be used to represent the inequality \eq{shannon_inequalities} and hence the values $H(X|Y)$,
    $H(Y|X)$ and $I(X;Y)$ correspond to the area of each section.  However, this correspondence does
    not generalise beyond two variables.  \emph{Right}: For example, when considering the entropy of
    three variables, the multivariate mutual information $I(X;Y;Z)$ cannot be accurately represented
    using an area since, as represented by the hatching, it is not non-negative.}
  \label{fig:analogy}
\end{figure}

For any pair of random variables $X$ and $Y$, the entropy $H$ satisfies the inequality
\begin{equation}
  \label{eq:shannon_inequalities}
  H(X) + H(Y) \geq H(X,Y) \geq H(X),\, H(Y) \geq 0.
\end{equation}
From this inequality, it is easy to see that the conditional entropies and mutual information are
non-negative,
\begin{align}
  H(X|Y) &= H(X,Y) - H(Y) \geq 0, \label{eq:cond_entropy_X|Y}\\
  H(Y|X) &= H(X,Y) - H(X) \geq 0, \label{eq:cond_entropy_Y|X}\\
  I(X;Y) &= H(X) + H(Y) - H(X,Y) \geq 0. \label{eq:mutual_info_set} %\label{eq:mutual_info}
\end{align}
For any pair of sets $A$ and $B$, a measure $\mu$ satisfies the inequality
\begin{equation}
  \label{eq:measure_inequalities}
  \mu(A) + \mu(B) \geq \mu(A \cup B) \geq \mu(A),\, \mu(B) \geq 0,
\end{equation}
which follows from the non-negativity of measure on the relative complements and the
intersection,
\begin{alignat}{3}
  \mu(A \sm B) &= \mu(A \cup B) - \mu(B) \geq 0  \label{eq:sm_measure_B}\\
  \mu(B \sm A) &= \mu(A \cup B) - \mu(A) \geq 0  \label{eq:sm_measure_A}\\
  \mu(A \cap B) &= \mu(A) + \mu(B) - \mu(A \cup B) \geq 0 .  \label{eq:intersection_measure}
\end{alignat}
Although the entropy is not itself a measure, several authors have noted the entropy is analogous to
measure in this regard \citep{reza1961, hu1962, abramson1963, campbell1965, csiszar1981, yeung1991,
  yeung2008}.  Indeed, it is this analogy which provides the justification for the typical depiction
of a pair of entropies using Venn diagrams, i.e.\ \fig{analogy}.  Nevertheless, \citet{mackay2003}
notes that this representation is misleading for at least two reasons: Firstly, since the measure on
the intersection $\mu(A \cap B)$ is a measure on a set, it gives the false impression that the
mutual information $I(X;Y)$ is the entropy of some intersection between the random variables.
Secondly, it might lead one to believe that this analogy can be generalised beyond two variables.
However, the analogy does not generalise beyond two variables since the multivariate mutual
information between three random variables,
\begin{align}
  \label{eq:mmi}
  I(X;Y;Z) &= H(X) + H(Y) + H(Z) - H(X,Y) - H(X,Z) - H(Y,Z) + H(Z,Y,Z),
\end{align}
is not non-negative \citep{fano1961, abramson1963}, and hence is not analogous to measure on the
triple intersection $\mu(A \cap B \cap C)$ \citep{abramson1963}.  Indeed, this ``unfortunate''
property led Cover and Thomas to conclude that ``there isn't really a notion of mutual information
common to three random variables''~\citep[p.49]{cover2012}.  Consequently, \citet{mackay2003}
recommends against depicting the entropy of three or more variables using a Venn diagram, i.e.\
\fig{analogy}, unless one is aware of these issues with this representation. 

However, \citet{yeung1991} showed that there is an analogy between entropy and \emph{signed} measure
that is valid for an arbitrary number of random variables.  To do this, Yeung defined a signed
measure on a suitably constructed sigma-field that is uniquely determined by the joint entropies of
the random variables involved. This correspondence enables one to establish information-theoretic
identities from measure-theoretic identities.  Thus, Venn diagrams can be used to represent the
entropy of three or more variables provided one is aware that the certain overlapping areas may
correspond to negative quantities.  Moreover, the multivariate mutual information is useful both as
summary quantity and for manipulating information-theoretic identities provided one is mindful it
may have ``no intuitive meaning''~\citep{yeung1991, csiszar1981}.

In this paper, we introduce new measures of multivariate information that are analogous to measures
upon sets and maintain their operational meaning when considering an arbitrary number of variables.
These new measures complement the existing measures of multivariate mutual information, and will be
constructed by considering the distinct ways in which a set of marginal observers might share their
information with a non-observing third party.  In \secRef{mic}, we will discuss the existing
measures of information content in terms of a set of individuals who each have different knowledge
about a joint realisation from a pair of random variables.  Then in \secRef{marginal}, we will
discuss how these individuals can share their information with a non-observing third party, and
derive the functional form of this individual's information.  In \secRef{synergy}, we relate this
new measure of information content back to the mutual information.
Sections~\ref{sec:properties}--\ref{sec:mid} will then generalise the arguments of
Sections~\ref{sec:marginal} and \ref{sec:synergy} to consider an arbitrary number of observers.
Finally, in \secRef{info}, we will discuss how these new measures can be combined to define new
measures of mutual information.

\section{Mutual Information Content}
\label{sec:mic}

Suppose that Alice and Bob are separately observing some process and let the discrete random
variables $X$ and $Y$ represent their respective observations.  Say that Johnny is a third
individual who can simultaneously make the same observations as Alice and Bob such that his
observations are given by the joint variable $(X,Y)$.  When a realisation $(x,y)$ occurs, Alice's
information is given by the information content~\citep{mackay2003},
\begin{equation}
  \label{eq:info_content}
  h(x) = - \log p_{X}(x) \geq 0,
\end{equation}
where $p_{X}(x)$ is the probability mass of the realisation $x$ of variable $X$ computed from the
probabilty distribution $p_{X}$.  Likewise, Bob's information is given by the information content
$h(y)$, while Johnny's information is by the joint information content $h(x,y)=-\log p_{XY}(x,y)$.
The information that Alice can expect to gain from an observation is given by the entropy,
\begin{equation}
  \label{eq:entropy}
  H(X) = \mathrm{E}_{X} \big[h(x)\big] \geq 0,
\end{equation}
where $\mathrm{E}_{X}$ represents an expectation value over realisations of the variable $X$.
Similarly, Bob's expected information gain is given by the entropy $H(X)$ and Johnny's
expected information is given by the joint entropy $H(X,Y)=\mathrm{E}_{XY} [h(x,y)]$.  Clearly, for
any realisation, Johnny has at least as much information as either Alice or Bob,
\begin{equation}
  \label{eq:shannon_inequalities_pw}
  h(x,y) \geq h(x), h(y) \geq 0.
\end{equation}
The conditional information content can be used to quantify how much more information Johnny has
relative to either Alice or Bob, respectively,
\begin{alignat}{3}
  &h(x|y) &&= h(x,y) - h(y) &&\geq 0, \label{eq:cond_info_content_y}\\
  &h(y|x) &&= h(x,y) - h(x) &&\geq 0. \label{eq:cond_info_content_x}
\end{alignat}
Similarly, we can quantify how much more information Johnny expects to get compared to either Alice
or Bob via the conditional entropies,
\begin{alignat}{3}
  &H(X|Y) &&= \mathrm{E}_{XY} \big[h(x|y)\big] &&\geq 0, \label{eq:cond_entropy_XY} \\
  &H(Y|X) &&= \mathrm{E}_{XY} \big[h(y|x)\big] &&\geq 0. \label{eq:cond_entropy_YX}
\end{alignat}

Now consider a fourth individual who does not directly observe the process, but with whom Alice and
Bob share their knowledge.  To be explicit, we are considering the situation whereby this individual
knows that the joint realisation $(x,y)$ has occurred and knows the marginal distributions $p_X$ and
$p_Y$, but does not know the joint distribution $p_{XY}$.  How much information does this individual
obtain from the shared marginal knowledge provided by Alice and Bob?  The answer to this question
will be provided in \secRef{marginal}, but for now let us consider a simplified version of this
problem.  Suppose that such an individual, whom we will call Indiana, assumes that Alice's
observations are independent of Bob's observations.  In terms of the probabilities, this means that
Indy believes that the joint probability $p_{XY}(x,y)$ is equal to the product probability
$p_{X \times Y}(x,y) = p_X(x)\,p_Y(y)$, while in terms of information, this assumption leads Indiana
to believe that her information is given by the independent information content $h(x) + h(y)$.
Moreover, the information that Indiana expects to gain from any one realisation is given by
$H(X)+H(Y)$.

Let us now compare how much information Indiana believes that she has compared to our other
observers.  For every realisation, Indiana believes that she has at least as much information as
either Alice or Bob,
\begin{equation}
  h(x) + h(y) \geq h(x), h(y) \geq 0.
\end{equation}
Since Indy knows what both Alice and Bob know individually, it is hardly surprising that she always
has at least as much information as either Alice or Bob.  The comparison between Indiana and Johnny,
however, is not so straightforward---there is no inequality that requires the information content of
the joint realisation to be less than the information content of the independent realisations, or
vice versa.  Consequently, the difference between the information that Indiana thinks she has and
Johnny's information, i.e.\ the mutual information content between a pair of realisations,
\begin{equation}
  \label{eq:mutual_info_content}
  i(x;y) = h(x) + h(y) - h(x,y) = \log \frac{p_{XY}(x,y)}{p_X(x)\,p_Y(y)},
\end{equation}
is not non-negative~\citep{fano1959}.  (This function goes by several different names including the
pointwise mutual information, the information density~\citep{pinsker1964} or simply the mutual
information~\citep{fano1961}.)  Thus, similar to how it is potentially misleading to depict the
entropy of three of more variables using a Venn diagram, representing the information content of two
variables using a Venn diagram is somewhat dubious (see \fig{analogy_pw}).

\begin{figure}[t]
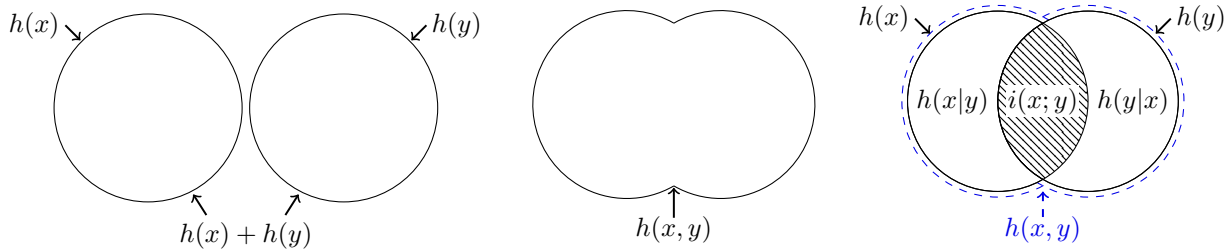

  \begin{minipage}{0.38\textwidth}
    \raggedright
    \vspace{5pt}
    \includegraphics{figs/two_pw_entropies_indep.tikz} \\
  \end{minipage}
  \begin{minipage}{0.25\textwidth}
    \centering
    \includegraphics{figs/two_pw_entropies_joint.tikz} \\
  \end{minipage}
  \begin{minipage}{0.35\textwidth}
    \raggedleft
    \includegraphics{figs/two_pw_entropies.tikz}
  \end{minipage}
  \caption{\emph{Left}: Indiana assumes that Alice's information $h(x)$ is independent of Bob's
    information $h(y)$ such that her information is given by $h(x)+h(y)$.  \emph{Middle}: Johnny
    knows the joint distribution $p_{XY}$, and hence his information is given by the joint
    information content $h(x,y)$.  \emph{Right}: There is no inequality that requires Johnny's
    information to be no greater than Indiana's assumed information, or vice versa.  On one hand,
    Johnny can have more information than Indiana since a joint realisation can be more surprising
    than both of the individual marginal realisations. On the other hand, Indiana can have more
    information than Johnny since a joint realisation can be less surprising than both of the
    individual marginal realisations occurring independently.  Thus, as represented by the hatching,
    the mutual information content $i(x;y)$ is not non-negative.}
  \label{fig:analogy_pw}
\end{figure}

Since Johnny knows the joint distribution $p_{XY}$, while Indiana only knows the marginal
distributions $p_X(x)$ and $p_Y(y)$, we might expect that Indiana should never have more information
than Johnny.  However, Indiana's assumed information is based upon the belief that Alice's
observations $X$ are independent of Bob's observations $Y$, which leads Indiana to overestimate her
information on average.  Indeed, Indiana is so optimistic that the information she expects to get upper
bounds the information that Johnny can expect to get,
\begin{equation}
  \label{eq:half_ineq}
  H(X) + H(Y) \geq H(X,Y) \geq 0.
\end{equation}
Thus, despite the fact that Indiana can have less information than Johnny for certain
realisations---i.e.\ despite the fact that the mutual information content is not non-negative---the
mutual information in expectation is non-negative,
\begin{equation}
  \label{eq:mutual_info}
  I(X;Y) =  H(X) + H(Y) - H(X,Y) = \mathrm{E}_{XY} \big[i(x;y)\big] \geq 0.
\end{equation}
Crucially, and in contrast to the information content \eq{info_content} and entropy \eq{entropy},
the non-negativity of the mutual information does not follow directly from the non-negativity of the
mutual information content \eq{mutual_info_content}, but rather must be proved separately.
(Typically, this is done by showing that the mutual information can be written as a Kullback-Leibler
divergence which is non-negative by Jensen's inequality, e.g.\ see \citet{cover2012}.)  Thus, not
only does Indiana potentially have more information than Johnny for certain realisations, but on
average we expect Indiana to have more information than Johnny.  Of course, by assuming Alice's
observations are independent of Bob's observations, Indiana is overestimating her information.
Thus, in the next section, we will consider the situation whereby one does not make this assumption.

\section{Marginal Information Sharing}
\label{sec:marginal}

Suppose that Eve is another individual who, similar to Indiana, does not make any direct
observations, but with whom both Alice and Bob share their knowledge; i.e.\ Eve knows the joint
realisation $(x,y)$ has occurred and knows the marginal distributions $p_X$ and $p_Y$, but does not
know the joint distribution~$p_{XY}$.  Furthermore, suppose that Eve is more conservative than
Indiana and does not assume that Alice's observations are independent of Bob's observations---how
much information does Eve have for any one realisation?

It seems clear that Eve's information should always satisfy the following two requirements. Firstly,
since Alice and Bob both share their knowledge with Eve, she should have at least as much
information as either of them have individually.  Secondly, since Eve has less knowledge than
Johnny, she should have no more information than Johnny; i.e.\ in contrast to Indy, Eve should never
have more information than Johnny.  As the following theorem shows, these two requirements uniquely
determine the functional form of Eve's information:
\begin{theorem}
  \label{thm:eve}
  The unique function $h(x \sqcup y)$ of $p_X(x)$ and $p_Y(y)$ that satisfies
  $h(x,y) \geq h(x \sqcup y) \geq h(x),\,h(y) \geq 0$ for all $p_{XY}(x,y)$ is
  \begin{equation}
    \label{eq:union_info_content}
    h(x \sqcup y) = \max\big(h(x),h(y)\big) \geq 0.
  \end{equation}
\end{theorem}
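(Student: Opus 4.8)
The plan is to establish existence and uniqueness in a single stroke by sandwiching $h(x \sqcup y)$ between a lower bound coming from $h(x \sqcup y) \geq h(x),\,h(y)$ and an upper bound coming from $h(x,y) \geq h(x \sqcup y)$. The essential observation is an asymmetry in the hypothesis: because $h(x \sqcup y)$ is required to be a function of $p_X(x)$ and $p_Y(y)$ alone, its value is frozen once the marginals are fixed, whereas $h(x,y) = -\log p_{XY}(x,y)$ still varies as $p_{XY}$ ranges over all joint distributions compatible with those marginals. The upper-bound requirement must therefore survive even the joint that makes $h(x,y)$ as small as possible.

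First I would dispatch the lower bound, which is immediate: the two demands $h(x \sqcup y) \geq h(x)$ and $h(x \sqcup y) \geq h(y)$ together give $h(x \sqcup y) \geq \max\big(h(x),h(y)\big)$, and non-negativity is inherited since $h(x),h(y) \geq 0$.

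Next I would extract the matching upper bound. Writing $a = p_X(x)$ and $b = p_Y(y)$, the joint mass obeys the Fréchet inequalities $\max(0,\,a+b-1) \leq p_{XY}(x,y) \leq \min(a,b)$, so by monotonicity of $-\log$ the smallest admissible information content is $-\log\min(a,b) = \max\big(h(x),h(y)\big)$, attained when $p_{XY}(x,y)=\min(a,b)$. Requiring $h(x,y) \geq h(x \sqcup y)$ for \emph{every} joint then forces $h(x \sqcup y) \leq \max\big(h(x),h(y)\big)$. Combining the two bounds yields $h(x \sqcup y) = \max\big(h(x),h(y)\big)$, which simultaneously verifies that this choice satisfies the requirements and shows that no other function can.

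The step I expect to require the most care is confirming that the upper Fréchet endpoint $p_{XY}(x,y)=\min(a,b)$ is genuinely realised by a valid joint distribution over the whole alphabet---not merely feasible for that single cell in isolation---since it is exactly this realisability that converts the ``for all $p_{XY}$'' quantifier into the sharp bound $\max\big(h(x),h(y)\big)$. One can exhibit such a distribution explicitly (for instance, when $a \leq b$, send all of the mass of the outcome $X=x$ into $Y=y$, then fill in the remaining rows and columns so as to respect the marginals); everything else in the argument is a direct consequence of the monotonicity of $-\log$.
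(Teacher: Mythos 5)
Your proposal is correct and takes essentially the same route as the paper's proof: the lower bound $h(x \sqcup y) \geq \max\big(h(x),h(y)\big)$ is immediate, and the matching upper bound comes from minimising $h(x,y)$ over admissible joints, i.e.\ from the maximal allowed mass $p_{XY}(x,y)=\min\big(p_X(x),p_Y(y)\big)$. The only difference is that you explicitly verify the Fréchet upper endpoint is realised by a genuine joint distribution on the whole alphabet, a point the paper's terser proof simply asserts.
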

\begin{proof}
  Clearly, the function is lower bounded by $\max\big(h(x),h(y)\big)$.  The upper bound is given by
  the minimum possible $h(x,y)$ which corresponds to the maximum allowed $p_{XY}(x,y)$.  For any
  $p_X(x)$ and $p_Y(y)$, the maximum allowed $p_{XY}(x,y)$ is $\min\big(p_X(x),p_Y(y)\big)$, which
  corresponds to $h(x,y)=\max\big(h(x),h(y)\big)$.
\end{proof}
Eve's information is given by the maximum of Alice's and Bob's information, or the information
content of the most surprising marginal realisation.  Although we have defined Eve's information by
requiring it to be no greater than Johnny's information, it is also clear that Eve also has no more
information than Indiana.  As such, Eve's information satisfies the inequality
\begin{equation}
  \label{eq:union_inequality}
  h(x) + h(y) \geq h(x \sqcup y) \geq h(x),\,h(y) \geq 0,
\end{equation}
which is analogous to the inequality \eq{measure_inequalities} satisfied by measure.  Hence, as
pre-empted by the notation (and will be further justified in \secRef{generalised}), Eve's
information will be referred to as the \emph{union information content}.  The union information
content is the maximum possible information that Eve can get from knowing what Alice and Bob
know---it quantifies the information provided by a joint event $(x,y)$ when one knows the marginal
distributions $p_X$ and $p_Y$, but does not know nor make any assumptions about the joint
distribution~$p_{XY}$.

Similar to how the conditional information contents \eq{cond_entropy_XY} and \eq{cond_entropy_YX}
enables us to quantify how much more information Johnny has relative to either Alice or Bob, the
inequality \eq{union_inequality} enables us to quantify how much information Eve gets from Alice
relative to Bob and vice versa, respectively,
\begin{alignat}{4}
  &h(x \ssm y) &&= h(x \sqcup y) - h(y) &&= \max\big(h(x) - h(y),0\big) &&\geq 0,\label{eq:uni_ic_x}\\
  &h(y \ssm x) &&= h(x \sqcup y) - h(x) &&= \max\big(0,h(y) - h(x)\big) &&\geq 0.\label{eq:uni_ic_y}
\end{alignat}
These non-negative functions are analogous to measure on the relative complements of a pair of sets
and will be called the \emph{unique information content from $x$ relative to $y$}, and vice versa
respectively.  It is easy to see that, since Eve's information is either equal to Alice's or Bob's
information (or both), at least one of these two functions must be equal to zero.  

The inequality \eq{union_inequality} also enables us to quantify how much more information Indiana
has relative to Eve.  Since Indiana's assumed information is given by the sum of Alice's and Bob's
information while Eve's information is given by the maximum of Alice's and Bob's information, the
difference between the two is given by the minimum of Alice's and Bob's information,
\begin{equation}
  \label{eq:max_min_two}
  h(x \sqcap y) = h(x) + h(y) - h(x \sqcup y) = h(x) + h(y) - \max\big(h(x), h(y) \big) 
  = \min\big(h(x), h(y)\big) \geq 0. 
\end{equation}
In contrast to the comparison between Indiana and Johnny, i.e.\ the mutual information content
\eq{mutual_info_content}, the comparison between Indiana and Eve is non-negative.  As such, this
function is analogous to measure on the intersection of two sets and hence will be referred to as
the \emph{intersection information content}.  The intersection information content is the minimum
possible information that Eve could have gotten from knowing either what Alice or Bob know, and is
given by the information content of the least surprising marginal realisation.

Finally, from \eq{union_info_content} and \eq{uni_ic_x}--\eq{max_min_two}, it is not difficult to
see that Eve's information can be decomposed into the information that could have been obtained from
either Alice or Bob, the unique information from Alice relative to Bob and the unique information
from Bob relative to Alice,
\begin{equation}
  \label{eq:decomp_union}
  h(x \sqcup y) = h(x \sqcap y) + h(x \ssm y) + h(y \ssm x).
\end{equation}
Of course, as already discussed, at least one of these unique information contents must be zero.
\fig{union_info} depicts this decomposition for some realisation whereby Alice's information $h(x)$
is greater than Bob's information $h(y)$.

To summarise thus far, both Alice and Bob share their information with Indiana and Eve, who then
each interpret this information in a different way.  By comparing \fig{analogy_pw} and
\fig{union_info}, we can easily contrast their distinct perspectives.  Eve is more conservative than
Indiana and assumes that she has gotten as little information as she could possibly have gotten from
knowing what Alice and Bob know; this is given by the maximum from Alice's and Bob's information, or
is the information content associated with the most surprising marginal realisation observed by
Alice and Bob.  In effect, Eve's conservative approach means that she pessimistically assumes that
the information provided by the least surprising marginal realisation was already provided by the
most surprising marginal realisation.  In contrast, Indiana optimistically assumes that the
information provided by the least surprising marginal realisation is independent of the information
provided by the most surprising marginal realisation.

% This explains the difference in their perspectives in \eq{max_min_two},
% $\min\big(h(x), h(y)\big)$, from another point of view.

\begin{figure}[t]
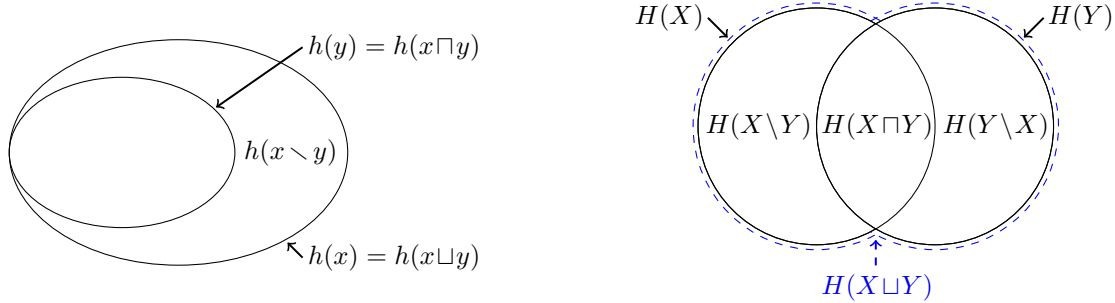

  \begin{minipage}{0.5\textwidth}
    \centering
    \includegraphics{figs/union_entropy_two.tikz}
  \end{minipage}
  \begin{minipage}{0.5\textwidth}
    \centering
    \includegraphics[width=0.8\columnwidth]{figs/avg_union_entropy.tikz}
  \end{minipage}
  \caption{\emph{Left}: If Alice's information $h(x)$ is greater than Bob's information $h(y)$, then
    Eve's information $h(x \sqcup y)$ is equal to Alice's information $h(x)$.  In effect, Eve is
    pessimistically assuming that information provided by the least surprising marginal realisation
    $h(x \sqcap y)$ is already provided by the most surprising marginal realisation $h(x \sqcup y)$,
    i.e.\ Bob's information $h(y)$ is a subset of Alice's information $h(x)$.  From this
    perspective, Eve gets unique information from Alice relative to Bob $h(x \ssm y)$, but does not
    get any unique information from Bob relative to Alice $h(y \ssm x)=0$.  \emph{Right}: Although
    for each realisation Eve can only get unique information from either Alice or Bob, it is
    possible that Eve can expect to get unique information from both Alice and Bob on average.  (Do
    not confuse this representation of the union entropy with the diagram that represents the joint
    entropy in \fig{analogy}.)}
  \label{fig:union_info}
\end{figure}

Let us now consider the information that Eve expects to get from a single realisation,
\begin{equation}
  \label{eq:union_entropy}
  H(X \sqcup Y) = \mathrm{E}_{XY} \big[h(x \sqcup y)\big] \geq 0.
\end{equation}
This function will be called the union entropy, and quantifies the expected surprise of the most
surprising realisation from either $X$ or $Y$.  Similar to how the non-negativity of the entropy
\eq{entropy} follows from the non-negativity of the information content \eq{info_content}, the
non-negativity of the union entropy \eq{union_entropy} follows directly from the non-negativity of
the union information content \eq{union_info_content}.  Since the expectation value is monotonic,
and since the union information content satisfies the inequality \eq{union_inequality}, we get that
the union entropy satisfies
\begin{equation}
  \label{eq:union_entropy_inequality}
  H(X) + H(Y) \geq H(X \sqcup Y) \geq H(X),\,H(Y) \geq 0,
\end{equation}
and hence is also analogous to measure on the union of two sets.

Using this inequality, we can quantify how much more information Eve expects to get from Alice
relative to Bob, or vice versa respectively,
\begin{alignat}{4}
  &H(X \sm Y)&&=H(X \sqcup Y) - H(Y)&&=\mathrm{E}_{XY} \big[h(x \ssm y)]&&\geq 0,\label{eq:uni_e_x}\\
  &H(Y \sm X)&&=H(X \sqcup Y) - H(X)&&=\mathrm{E}_{XY} \big[h(y \ssm x)]&&\geq 0,\label{eq:uni_e_y}
\end{alignat}
These functions are also analogous to measure on the relative complements of a pair of sets and
hence will be called the unique entropy from $X$ relative to $Y$, and vice versa respectively.
Crucially, and in contrast to \eq{uni_ic_x} and \eq{uni_ic_y}, both of these quantities can be
simultaneously non-zero; although Alice might observe the most surprising event in one joint
realisation, Bob might observe the most surprising event in another and hence both functions can be
simultaneously non-zero.

Now consider how much more information Indiana expects to get relative to Eve,
\begin{equation}
  \label{eq:intersection_entropy}
  H(X \sqcap Y) = H(X) + H(Y) - H(X \sqcup Y) = \mathrm{E}_{XY} \big[h(x \sqcap y)\big] \geq 0.
\end{equation}
This function is also analogous to measure on the intersection of two sets function will be called
the intersection entropy.  In contrast to the mutual information \eq{mutual_info}, since the
intersection information content \eq{max_min_two} is non-negative, we do not require an additional
proof to show that the intersection entropy is non-negative.

Finally, similar to \eq{decomp_union}, we can decompose Eve's expected information into the
following components,
\begin{equation}
  \label{eq:decomp_union_entropy}
  H(X \sqcup Y) = H(X \sqcap Y) + H(X \sm Y) + H(Y \sm X).
\end{equation}
It is important to reiterate that, in contrast to \eq{decomp_union}, there is nothing which requires
either of the two unique entropies to be zero.  Thus, as shown in \fig{union_info}, the Venn diagram
which represents the union and intersection entropy differs from that which represents the union
information content.

\section{Synergistic Information Content}
\label{sec:synergy}

\begin{figure}[t]
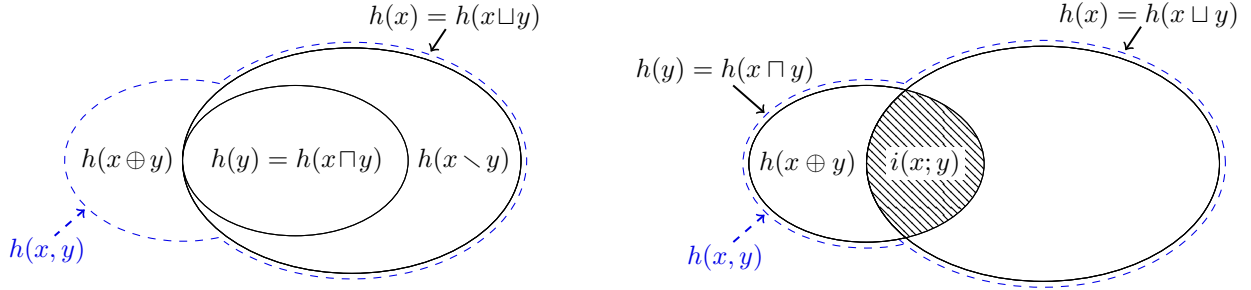

  \begin{minipage}{0.5\textwidth}
    \raggedright
    \includegraphics{figs/decomp_entropy_two.tikz}
  \end{minipage}
  \begin{minipage}{0.5\textwidth}
    \raggedleft
    \includegraphics[width
    =1\columnwidth]{figs/synergistic_entropy_two.tikz}
  \end{minipage}
  \caption{\emph{Left}: This Venn diagram shows how the synergistic information $h(x \oplus y)$ can
    be defined by comparing the joint information content $h(x,y)$ from \fig{analogy_pw} to the
    union information content $h(x \sqcup y)$ from \fig{union_info}.  Note that, for this particular
    realisation, we are assuming that $h(x) > h(y)$.  It also provides a visual representation of
    the decomposition \eq{bivar_decomp_avg} of the joint information content $h(x,y)$.
    \emph{Right}: By rearranging the marginal entropies such that they match \fig{analogy_pw}
    (albeit with different sizes here), it is easy to see why the mutual information content
    $i(x;y)$ is equal to the intersection information content $h(x \sqcap y)$ minus the synergistic
    information content $h(x \oplus y)$, c.f.\ \eq{mi_diff}.}
  \label{fig:synergistic_info_content}
\end{figure}

As we discussed at the beginning of the previous section, and as we required in
Theorem~\ref{thm:eve}, one of the defining features of Eve's information is that it is never greater
than Johnny's information,
\begin{equation}
  \label{eq:joint_inequality}
  h(x,y) \geq h(x \sqcup y).
\end{equation}
Thus, we can compare how much more information Johnny has relative to Eve,
\begin{equation}
  \label{eq:synergistic_ic}
  h(x \oplus y) = h(x,y) - h(x \sqcup y)
    = h(x,y) - \max\big(h(x), h(y) \big)
    = \min\big(h(y|x), h(x|y)\big) \geq 0.
\end{equation}
This non-negative function will be called the \emph{synergistic information content}, and it
quantifies how much more information one gets from knowing the joint probability $p_{XY}(x,y)$
relative to merely knowing the marginal probabilities $p_X(x)$ and $p_Y(y)$.
\fig{synergistic_info_content} shows how this relationship can represented using a Venn diagram.  Of
course, by this definition, Johnny's information is equal to the union information content plus the
synergistic information content, and hence, by using \eq{decomp_union}, we can decompose Johnny's
information into the intersection information content, the unique information contents, and the
synergistic information contents,
\begin{equation}
  \label{eq:bivar_decomp}
  h(x,y) = h(x \sqcup y) + h(x \oplus y) = h(x \sqcap y) + h(x \ssm y) + h(y \ssm x) + h(x \oplus y).
\end{equation}
This decomposition can be seen in \fig{synergistic_info_content}, although it is important to recall
that at least one of $h(x \ssm y)$ and $h(y \ssm x)$ must be equal to zero.  In a similar manner,
the extra information that Johnny has relative to Bob \eq{cond_info_content_y} can be decomposed
into the unique information content from Alice and the synergistic information content, and vice
versa for the extra information that Johnny has relative to Alice \eq{cond_info_content_x},
\begin{align}
  h(x|y) &= h(x \ssm y) + h(x \oplus y), \label{eq:cond_decomp_x} \\
  h(y|x) &= h(y \ssm x) + h(x \oplus y). \label{eq:cond_decomp_y}
\end{align}

Now recall that the mutual information content \eq{mutual_info_content} is given by Indiana's
information minus Johnny's information.  By replacing Johnny's information with the union
information content plus the synergistic information content via \eq{synergistic_ic} and rearranging
using \eq{max_min_two}, we get that the mutual information content is equal to the intersection
information content minus the synergistic information content,
\begin{align}
  \label{eq:mi_diff}
  i(x;y) = h(x) + h(y) - h(x,y) = h(x) + h(y) - h(x \sqcup y) - h(x \oplus y)
    =  h(x \sqcap y) - h(x \oplus y).
\end{align}
Indeed, this relationship can be identified in \fig{synergistic_info_content}.  Clearly, the mutual
information content is negative whenever the synergistic information content is greater than the
intersection information content.  From this perspective, the mutual information content can be
negative because there is nothing to suggest that the synergistic information content should be no
greater than the intersection information content.  In other words, the additional surprise
associated with knowing $p_{XY}(x,y)$ relative to merely knowing $p_X(x)$ and $p_Y(y)$ can exceed
the surprise of the least surprising marginal realisation.

Let us now quantify how much more information Johnny expects to get relative to Eve,
\begin{equation}
  \label{eq:synergistic_e}
  H(X \oplus Y) = \mathrm{E}_{XY} \big[h(x \oplus y)\big] = H(X,Y) - H(X \sqcup Y) \geq 0,
\end{equation}
which we will call the synergistic entropy.  Crucially, although the synergistic information content
is given by the minimum of the two conditional information contents, the synergistic entropy does
not in general equal one of the two the conditional entropies.  This is because, although Alice
might observe the most surprising event in one joint realisation such that the synergistic
information content is equal to Bob's information given Alice's information, Bob might observe the
most surprising event in another realisation such that the synergistic information content is equal
to Alice's information given Bob's information for that particular realisation.  Thus, the
synergistic entropy does not equal the conditional entropy for the same reason that unique entropies
\eq{uni_e_x} and \eq{uni_e_y} can be simultaneously non-zero.

\begin{figure}[t]
  \centering
  \includegraphics[width =0.5\textwidth]{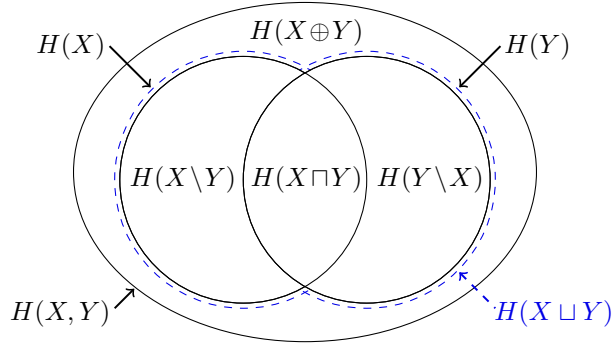}
  \caption{This Venn diagram shows how the synergistic entropy $H(X \oplus Y)$ can be defined by
    comparing the joint entropy $H(X,Y)$ from \fig{analogy} to the union entropy $H(X \sqcup Y)$ from
    \fig{union_info}.  It also provides a visual representation of the decomposition
    \eq{bivar_decomp_avg} of the joint entropy $H(X,Y)$.}
  \label{fig:synergistic_entropy}
\end{figure}

With the definition of synergistic entropy, it is not difficult to show that, similar to
\eq{bivar_decomp}, the joint entropy can be decomposed into the following components,
\begin{equation}
  \label{eq:bivar_decomp_avg}
  H(X,Y) = H(X \sqcup Y) + H(X \oplus Y) = H(X \sqcap Y) + H(X \sm Y) + H(Y \sm X) + H(X \oplus Y).
\end{equation}
\fig{synergistic_entropy} depicts this decomposition using a Venn diagram, and shows how the union
entropy from \fig{union_info} is related to the joint entropy $H(X,Y)$.  Likewise, similar to
\eq{cond_decomp_x} and \eq{cond_decomp_y}, it is easy to see that conditional entropies can be
decomposed as follows,
\begin{align}
  H(X|Y) &= H(X \sm Y) + H(X \oplus Y), \\
  H(Y|X) &= H(Y \sm X) + H(X \oplus Y) 
\end{align}
Finally, just like \eq{mi_diff}, we can also show that the mutual information is equal to the
intersection entropy minus the synergistic entropy,
\begin{equation}
  I(X;Y) =  H(X \sqcap Y) - H(X \oplus Y) \geq 0.
\end{equation}
Although there is nothing to suggest that the synergistic information content must be no greater
than the intersection information content, we know that the synergistic entropy must be no greater
than the intersection entropy because $I(X;Y) \geq 0$.  In other words, the expected difference
between the surprise of the joint realisation and the most surprising marginal realisation cannot
exceed the expected surprise of the least surprising realisation.

\section{Properties of the Union and Intersection Information Content}
\label{sec:properties}

Theorem~\ref{thm:eve} determined the function form of Eve's information when Alice and Bob share
their knowledge with her.  We now wish to generalise this result to consider the situation whereby
an arbitrary number of marginal observers share their information with Eve.  Rather than try to
directly determine the functional form, however, we will proceed by considering the algebraic
structure of shared marginal information.  

If Alice and Bob observe the same realisation $x$ such that they have the same information $h(x)$,
then upon sharing we would intuitively expect Eve to have the same information $h(x)$.  Similarly,
the minimum information that Eve could have received from either Alice or Bob should be the same
information $h(x)$.  Since the maximum and minimum operators are idempotent, the union and
intersection information content both align with this intuition.
\begin{prop}[Idempotence]
  \label{prop: idempotent}
  The union and intersection information content are idempotent,
  \begin{align}
    h(x \sqcup x) &= h(x), \\
    h(x \sqcap x) &= h(x). \label{eq:intersectionIdempotent}
  \end{align}
\end{prop}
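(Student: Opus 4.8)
The plan is to reduce both identities to the idempotence of the scalar $\max$ and $\min$ operators by substituting the explicit functional forms that were already established for the union and intersection information content. Since both quantities have been shown to depend only on the real numbers $h(x)$ and $h(y)$, the proposition should follow immediately once the arguments are set equal.

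Concretely, I would first recall from Theorem~\ref{thm:eve}, equation \eq{union_info_content}, that the union information content is $h(x \sqcup y) = \max\big(h(x), h(y)\big)$, and from equation \eq{max_min_two} that the intersection information content is $h(x \sqcap y) = \min\big(h(x), h(y)\big)$. The idempotence statements concern the case where Alice and Bob observe realisations carrying the same information content, so I would substitute the common value into each formula. For the union this gives $h(x \sqcup x) = \max\big(h(x), h(x)\big) = h(x)$, and for the intersection it gives $h(x \sqcap x) = \min\big(h(x), h(x)\big) = h(x)$, where each final equality is just the elementary fact that $\max(a,a) = a$ and $\min(a,a) = a$ for any real number $a$.

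The main obstacle here is essentially non-existent: the result is an immediate corollary of the closed forms derived earlier, and the only point worth checking is that the substitution is legitimate. Because \eq{union_info_content} and \eq{max_min_two} were obtained for arbitrary marginal distributions $p_X$ and $p_Y$ and express the union and intersection information content purely as functions of the scalars $h(x)$ and $h(y)$, evaluating them at a coinciding pair of realisations is valid, and the idempotence of $\max$ and $\min$ then closes both cases. I would therefore present the proof as a two-line verification rather than a structured argument.
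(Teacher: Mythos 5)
Your proof is correct and takes essentially the same route as the paper: the paper also justifies this property by observing that, since $h(x \sqcup y) = \max\big(h(x),h(y)\big)$ by Theorem~\ref{thm:eve} and $h(x \sqcap y) = \min\big(h(x),h(y)\big)$ by \eq{max_min_two}, idempotence is inherited directly from the idempotence of the $\max$ and $\min$ operators. Your two-line verification is exactly the intended argument.
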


It also seems reasonable to expect that Eve's information should not depend on the order in which
Alice and Bob share their information, and nor should the minimum information that Eve could have
received from either individual.  Again, since the maximum and minimum operators are commutative,
the union and intersection information content both align with our intuition.
\begin{prop}[Commutativity]
  \label{prop: commutivity}
  The union and intersection information content are commutative,
  \begin{align}
    h(x \sqcup y) &= h(y \sqcup x), \\
    h(x \sqcap y) &= h(y \sqcap x).
  \end{align}
\end{prop}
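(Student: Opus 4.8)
The plan is to reduce both claimed identities to the commutativity of the elementary $\max$ and $\min$ operators on the real numbers, exploiting the closed forms already established earlier in the paper. By Theorem~\ref{thm:eve}, the union information content admits the explicit expression $h(x \sqcup y) = \max\big(h(x), h(y)\big)$ from \eq{union_info_content}, and by \eq{max_min_two} the intersection information content is $h(x \sqcap y) = \min\big(h(x), h(y)\big)$. Since $h(x)$ and $h(y)$ are real numbers (indeed non-negative reals by \eq{info_content}), the whole statement becomes a statement about two real arguments.

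First I would substitute these closed forms into both sides of each claimed equation. For the union case this turns $h(x \sqcup y) = h(y \sqcup x)$ into $\max\big(h(x), h(y)\big) = \max\big(h(y), h(x)\big)$, and for the intersection case it turns $h(x \sqcap y) = h(y \sqcap x)$ into $\min\big(h(x), h(y)\big) = \min\big(h(y), h(x)\big)$. Both identities then follow immediately from the fact that $\max(a,b) = \max(b,a)$ and $\min(a,b) = \min(b,a)$ for all real $a,b$, which is just the symmetry of these operators under interchange of their arguments.

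There is no genuine obstacle here: the content of the proposition is entirely absorbed into the closed-form representations derived as consequences of Theorem~\ref{thm:eve}, so once those are invoked the argument is a one-line appeal to the symmetry of $\max$ and $\min$. If one instead wished to avoid citing the explicit maximum and minimum formulas and argue purely from the defining inequality $h(x,y) \geq h(x \sqcup y) \geq h(x),\,h(y) \geq 0$, the only point requiring care would be noting that this defining condition is itself symmetric in $x$ and $y$ (since the joint information content $h(x,y)$ and the pair of lower bounds $h(x),\,h(y)$ are unchanged under interchange), together with the uniqueness clause of Theorem~\ref{thm:eve} forcing $h(x \sqcup y)$ and $h(y \sqcup x)$ to coincide. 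Either route is routine, so I would present the direct substitution argument as the cleanest and shortest.
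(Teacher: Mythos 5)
Your argument is correct and matches the paper's own justification, which likewise reduces both identities to the commutativity of the $\max$ and $\min$ operators applied to the closed forms $h(x \sqcup y) = \max\big(h(x),h(y)\big)$ and $h(x \sqcap y) = \min\big(h(x),h(y)\big)$. The alternative symmetry-of-the-defining-inequality route you sketch is also sound but unnecessary, as you note.
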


Now suppose that Charlie is another individual who, just like Alice and Bob, is separately observing
some process, and let the random variable $Z$ represent her observations.  Say that Dan is yet
another individual with whom, just like Eve, our observers can share their information. Intuitively,
it should not matter whether Alice, Bob and Charlie share their information directly with Eve, or
whether they share their information through Dan.  To be specific, Alice and Bob could share their
information with Dan such that his information is given by $h(x \sqcup y)$, and then Charlie and Dan
could subsequently share their information with Eve such that her information is given by
$h\big((x \sqcup y) \sqcup z \big)$.  Similarly, Bob and Charlie could share their information with
Dan such that his information is given by $h(y \sqcup z)$, and then Alice and Dan could subsequently
share their information with Eve such that her information is given by
$h\big(x \sqcup (y \sqcup z)\big)$.  Alternatively, Alice, Bob and Charlie could entirely bypass Dan
and share their information directly with Eve such that her information is given by
$h(x \sqcup y \sqcup z)$.  Since the maximum operator is associative, the union information content
is the same in all three cases and hence aligns with our intuition.  A similar argument can be made
to show that the intersection information content is also associative.
\begin{prop}[Associative]
  \label{prop: associativity}
  The union and intersection information content are associative,
  \begin{alignat}{3}
    &h(x \sqcup y \sqcup z) &= h\big((x \sqcup y) \sqcup z \big) &= h\big(x \sqcup (y \sqcup z)\big), \\
    &h(x \sqcap y \sqcap z) &= h\big((x \sqcap y) \sqcap z \big) &= h\big(x \sqcap (y \sqcap z)\big),  
  \end{alignat}
\end{prop}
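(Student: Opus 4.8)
The plan is to reduce both claims to the associativity of the $\max$ and $\min$ operations on the real line. By Theorem~\ref{thm:eve} the union information content is $h(x \sqcup y) = \max\big(h(x), h(y)\big)$, and by \eq{max_min_two} the intersection information content is $h(x \sqcap y) = \min\big(h(x), h(y)\big)$. Since each of $h(x)$, $h(y)$ and $h(z)$ is simply a non-negative real number, proving the two displayed identities amounts to showing that $\max$ and $\min$, regarded as binary operations on $\mathbb{R}_{\geq 0}$, are associative.

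First I would treat the union. Substituting the closed form of Theorem~\ref{thm:eve} into the inner and then the outer bracket yields
\begin{equation*}
  h\big((x \sqcup y) \sqcup z\big) = \max\Big(\max\big(h(x), h(y)\big), h(z)\Big), \qquad
  h\big(x \sqcup (y \sqcup z)\big) = \max\Big(h(x), \max\big(h(y), h(z)\big)\Big).
\end{equation*}
Both expressions equal $\max\big(h(x), h(y), h(z)\big)$, the largest of the three numbers, because the maximum of a finite set of reals is the supremum of that set and hence does not depend on the order or grouping in which pairwise maxima are taken. I can make this explicit in one line by checking that whichever of the three values is largest is selected by either bracketing. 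This common value is exactly what the left-hand symbol $h(x \sqcup y \sqcup z)$ denotes, namely the information content of the most surprising of the three marginal realisations --- the natural extension of Theorem~\ref{thm:eve} to three observers.

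The intersection case is verbatim the same argument with $\max$ replaced by $\min$: both bracketings evaluate to $\min\big(h(x), h(y), h(z)\big)$, the information content of the least surprising of the three realisations, so the three $\sqcap$-expressions agree.

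The hard part will be essentially nonexistent, since all of the content is carried by the elementary associativity of $\max$ and $\min$. The only point that deserves a sentence of care is to note that the unbracketed three-fold symbols $h(x \sqcup y \sqcup z)$ and $h(x \sqcap y \sqcap z)$ are well-defined precisely because the two bracketings coincide; that is, associativity is what licenses dropping the parentheses in the first place.
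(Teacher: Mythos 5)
Your proposal is correct and takes essentially the same route as the paper: the paper justifies this property by appealing directly to the associativity of the $\max$ and $\min$ operators (framed narratively via the intermediary Dan), which is exactly the reduction you spell out using the closed forms from Theorem~\ref{thm:eve} and \eq{max_min_two}. Your closing remark that associativity is what licenses the unbracketed symbols $h(x \sqcup y \sqcup z)$ and $h(x \sqcap y \sqcap z)$ is a point the paper also makes, via the general associativity theorem, when extending to $n$ observers.
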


Suppose now that Alice and Bob share their information with Dan such the information that he could
have gotten from either Alice or Bob is given by $h(x \sqcap y)$.  If Alice and Dan both share their
information with Eve, then Eve's information is given by
\begin{equation}
  h\big(x \sqcup (x \sqcap y)\big) = \max\big(h(x),\min\big(h(x),h(y)\big)\big) = h(x),
\end{equation}
and hence Bob's information has been absorbed by Alice's information.  Now suppose that Alice and
Bob share their information with Dan such his information is given by $h(x \sqcup y)$.  If Alice and
Dan both share their information with Eve, then the information that Eve could have gotten from
either Alice or Dan is given by
\begin{equation}
  h\big(x \sqcap (x \sqcup y)\big) = \min\big(h(x),\max\big(h(x),h(y)\big)\big) = h(x).
\end{equation}
Again, Bob's information has been absorbed by Alice's information.  Both of these results are a
consequence of the fact that the maximum and minimum operators are connected to each other by the
absorption identity.
\begin{prop}[Absorption]
  \label{prop: absorption}
  The union and intersection information content are connected by absorption,
  \begin{align}
    h\big(x \sqcup (x \sqcap y)\big) &= h(x), \\
    h\big(x \sqcap (x \sqcup y)\big) &= h(x). 
  \end{align}
\end{prop}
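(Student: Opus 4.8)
The plan is to reduce both identities to elementary facts about the maximum and minimum of two real numbers, using the explicit functional forms already established. From Theorem~\ref{thm:eve} we have $h(x \sqcup y) = \max\big(h(x),h(y)\big)$, and from \eq{max_min_two} we have $h(x \sqcap y) = \min\big(h(x),h(y)\big)$. Substituting these into the left-hand sides converts each lattice identity into a nested expression in $\max$ and $\min$ of the real quantities $h(x)$ and $h(y)$, which are themselves non-negative reals by \eq{info_content}.

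For the first identity, I would substitute the intersection into the union to obtain $h\big(x \sqcup (x \sqcap y)\big) = \max\big(h(x),\min(h(x),h(y))\big)$. Since $\min(h(x),h(y)) \leq h(x)$ holds for any pair of reals, the outer maximum is attained at $h(x)$, giving $h(x)$ as required. For the second identity, the symmetric substitution yields $h\big(x \sqcap (x \sqcup y)\big) = \min\big(h(x),\max(h(x),h(y))\big)$, and since $\max(h(x),h(y)) \geq h(x)$, the outer minimum is attained at $h(x)$. Both steps are precisely the computations displayed in the two equations immediately preceding the statement, so the proof merely records them.

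The main point to observe is that there is no genuine obstacle: the entire content of absorption is the order-theoretic fact that $\min$ returns a lower bound and $\max$ an upper bound of their arguments. It is worth emphasising, however, that absorption does \emph{not} follow from idempotence, commutativity, and associativity alone; two arbitrary semilattice operations on the same set need not absorb one another. Thus the argument genuinely relies on the interaction between $\max$ and $\min$ through the total ordering of the reals, rather than on the formal properties catalogued in Properties~\ref{prop: idempotent}--\ref{prop: associativity}.
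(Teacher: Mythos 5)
Your proof is correct and takes essentially the same approach as the paper, which establishes absorption by exactly the two computations you record, namely $h\big(x \sqcup (x \sqcap y)\big) = \max\big(h(x),\min(h(x),h(y))\big) = h(x)$ and $h\big(x \sqcap (x \sqcup y)\big) = \min\big(h(x),\max(h(x),h(y))\big) = h(x)$, displayed immediately before the Property. Your supplementary remark that absorption is genuinely independent of idempotence, commutativity and associativity is accurate but adds nothing the argument requires.
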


Now say that Daniella is, just like Eve or Dan, an individual with whom our observers can share
their information.  Consider the following two cases: Firstly, suppose that Bob and Charlie share
their information with Dan such that the information that Dan could have gotten from either Bob or
Charlie is given by $h(y \sqcap z)$.  If both Alice and Dan share their information with Eve, then
her information is given by $h\big(x \sqcup (y \sqcap z)\big)$.  In the second case, suppose that
Alice and Bob share their information with Dan such that his information is given by
$h(x \sqcup y)$, while Alice and Charlie simultaneously share their information with Daniella such
that her information is given by $h(x \sqcup z)$.  If Dan and Daniella both share their information
with Eve, then the information that she could have gotten from either Dan or Daniella is then given
by $h\big((x \sqcup y) \sqcap (x \sqcup z)\big)$.  In both cases, Eve has the same information since
the maximum operator is distributive,
\begin{align}
  h\big(x \sqcup (y \sqcap z)\big) &= \max\big(h(x),\min\big(h(y),h(z)\big)\big) \nn\\
    &=\min\big(\max\big(h(x),h(y)\big),\max\big(h(x),h(z)\big)\big)
      = h\big((x \sqcup y) \sqcap (x \sqcup z)\big).
\end{align}
Since the maximum and minimum operators are distributive over each other, regardless of whether Eve
gets Alice's information and Bob's or Charlie's information, or if Eve gets Alice's and Bob's
information or Alice's and Charlie's information, Eve has the same information. The same reasoning
can be applied to show that regardless of whether Eve gets Alice's information or Bob's and
Charlie's information, or if Eve gets Alice's or Bob's information and Alice's or Charlie's
information, Eve has the same information.
\begin{prop}[Distributivity]
  \label{prop: distributivity}
  The union and intersection information content are distribute over each other,
  \begin{align}
    h\big(x \sqcup (y \sqcap z)\big) &= h\big((x \sqcup y) \sqcap (x \sqcup z)\big),
                                   \label{eq:union_distrib}\\
    h\big(x \sqcap (y \sqcup z)\big) &= h\big((x \sqcap y) \sqcup (x \sqcap z)\big).
  \end{align}
\end{prop}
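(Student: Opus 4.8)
The plan is to reduce both identities to the corresponding distributive laws for the real-valued operations $\max$ and $\min$, and then to dispatch those by a short symmetry argument. Substituting the explicit functional forms $h(x \sqcup y) = \max\big(h(x),h(y)\big)$ from Theorem~\ref{thm:eve} and $h(x \sqcap y) = \min\big(h(x),h(y)\big)$ from \eq{max_min_two}, and writing $a = h(x)$, $b = h(y)$, $c = h(z)$, the two claims become
\begin{align}
  \max\big(a, \min(b,c)\big) &= \min\big(\max(a,b), \max(a,c)\big), \nn \\
  \min\big(a, \max(b,c)\big) &= \max\big(\min(a,b), \min(a,c)\big),
\end{align}
for all $a,b,c \in \bb{R}_{\geq 0}$. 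So the entire task is to verify these two elementary facts about $\max$ and $\min$ on the reals.

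First I would prove the top identity. The key observation is that both sides are symmetric under interchanging $b$ and $c$, so I may assume without loss of generality that $b \leq c$. Under this assumption the left-hand side equals $\max(a,b)$. On the right-hand side, $c \geq b$ forces $\max(a,c) \geq \max(a,b)$, so the minimum of the two is again $\max(a,b)$, and the two sides agree. The second identity would follow by the same device: both sides are once more symmetric in $b$ and $c$, so I assume $b \leq c$, whence the left-hand side is $\min(a,c)$; and since $\min(a,b) \leq \min(a,c)$, the right-hand side collapses to $\min(a,c)$ as well.

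The proof has no genuine obstacle---the whole content is that $\max$ and $\min$ distribute over one another. The only point requiring care is to record \emph{why} these distributive laws hold at all, since they fail in a general lattice and hold here precisely because $h$ takes values in the totally ordered set $\bb{R}_{\geq 0}$. It is exactly this total ordering that reduces the distributivity check to a comparison of finitely many cases (or, via the symmetry reduction above, to a single case). Thus this proposition is what certifies that the algebra of shared information is genuinely distributive, a fact that will matter when identifying the lattice structure anticipated in the abstract.
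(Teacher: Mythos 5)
Your proof is correct and follows essentially the same route as the paper: both substitute the functional forms $h(x \sqcup y) = \max\big(h(x),h(y)\big)$ and $h(x \sqcap y) = \min\big(h(x),h(y)\big)$ and reduce the claim to the distributivity of $\max$ and $\min$ over each other on the (totally ordered) reals. The only difference is that the paper simply invokes this distributivity as a known fact, whereas you verify it explicitly via the symmetry reduction to the case $b \leq c$---a harmless elaboration, not a different approach.
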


\begin{figure}[p]
  \includegraphics{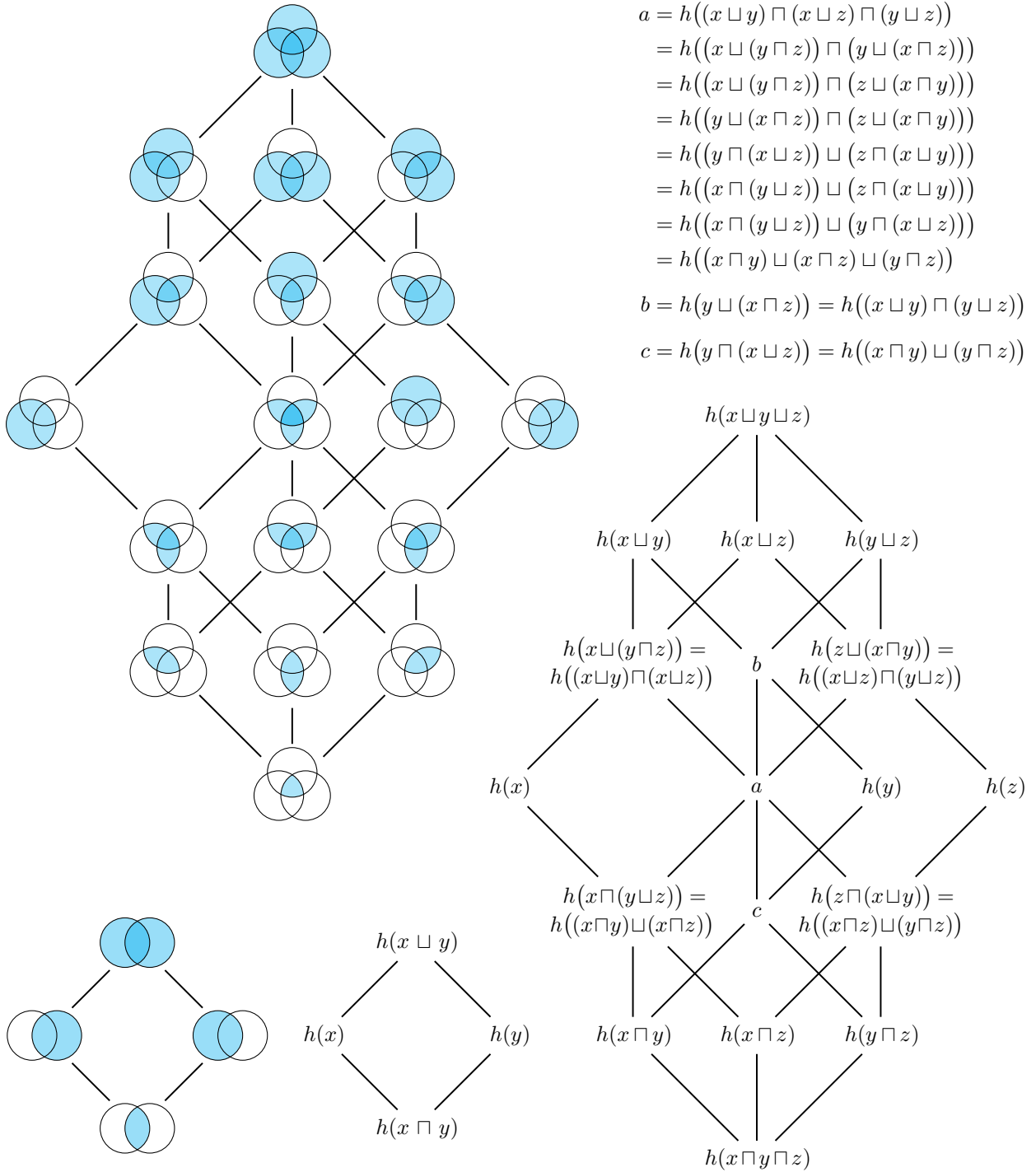}
  \caption{\emph{Bottom right}: The distributive lattices
    $\langle\bm{x},h(\,\sqcup\,),h(\,\sqcap\,)\rangle$ of information contents for two and three and
    three observers.  It is also important to note that, by replacing $h$, $x$, $y$ and $z$ with
    $H$, $X$, $Y$ and $Z$ respectively, we can obtain the distributive lattices for entropy.  In
    fact, this is crucial since Property~\ref{prop:connex} enables us to reduce the distributive
    lattice of information contents to a mere total order; however, this property does not apply to
    the entropies, and hence we cannot further simplify the lattice of entropies.  \emph{Top left}:
    By the fundamental theorem of distributive lattices, the distributive lattices of marginal
    information contents has a one-to-one correspondence with the lattice of sets.  Notice that the
    lattice for two sets corresponds to the Venn diagram for entropies in \fig{union_info}.}
  \label{fig:distributive_lattices}
\end{figure}

Now consider a set of $n$ individuals and let $\bm{X} = \{ X_1, X_2, \ldots, X_n\}$ be the joint
random variable that represents their observations.  Suppose that these individuals together observe
the joint realisation $\bm{x} = \{ x_1, x_2, \ldots, x_n\}$ from $\bm{X}$.  By Property~\ref{prop:
  associativity} and the general associativity theorem, it is clear that Eve's information is given
by
\begin{equation}
  h(x_1 \sqcup x_2 \sqcup \ldots \sqcup x_n ) = \max\big(h(x_1), h(x_2), \ldots, h(x_n) \big) \geq 0,
\end{equation}
while the minimum information that Eve could have gotten from any individual observer is given by
\begin{equation}
  h(x_1 \sqcap x_2 \sqcap \ldots \sqcap x_n ) = \min\big(h(x_1), h(x_2), \ldots, h(x_n) \big) \geq 0. 
\end{equation}
This accounts for the situation whereby $n$ marginal observers directly share their information with
Eve, and could clearly be considered for any subset $\bm{S}$ of the observers $\bm{X}$.  We now wish
to consider all of the distinct ways that these marginal observers can share their information
indirectly with Eve.  As the following theorem shows, Properties~\ref{prop: idempotent}--\ref{prop:
  distributivity} completely characterise the unique methods of marginal information sharing.

\begin{theorem}
  The marginal information contents form a join semi-lattices $\langle\bm{x},h(\,\sqcup\,)\rangle$
  under the $\max$ operator.  Separately, the marginal information contents form a meet semi-lattice
  $\langle\bm{x},h(\,\sqcap\,)\rangle$ under the $\min$ operator.
\end{theorem}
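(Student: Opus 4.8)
The plan is to observe that the statement asks for nothing more than the recognition that the earlier properties are precisely the defining axioms of a semi-lattice. Recall that a \emph{join semi-lattice} is a set equipped with a single binary operation that is idempotent, commutative, and associative (equivalently, a commutative idempotent semigroup), and a \emph{meet semi-lattice} is its order-theoretic dual, obtained by imposing the same three axioms on the dual operation. The entire proof therefore reduces to checking these three axioms for $\sqcup$ interpreted as $\max$ and, separately, for $\sqcap$ interpreted as $\min$.

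First I would dispatch closure, which is the only point requiring a moment's thought, since a binary operation must map back into its underlying set. Taking the underlying set to be the marginal information contents $\{h(x_1),\dots,h(x_n)\}$, note that $\max\big(h(x_i),h(x_j)\big)$ and $\min\big(h(x_i),h(x_j)\big)$ each return one of their two arguments; hence both $\max$ and $\min$ are genuine binary operations on this set and generate no new elements. More abstractly, closure also follows because the information contents inherit the total order of the reals, within which $\max$ and $\min$ always exist.

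With closure in hand, the three algebraic axioms follow immediately from the properties already established: idempotence is Property~\ref{prop: idempotent}, commutativity is Property~\ref{prop: commutivity}, and associativity is Property~\ref{prop: associativity}, each stated and justified for both the $\sqcup$/$\max$ and $\sqcap$/$\min$ operations. Invoking these, $\langle\bm{x},h(\,\sqcup\,)\rangle$ satisfies the join-semi-lattice axioms and $\langle\bm{x},h(\,\sqcap\,)\rangle$ satisfies the meet-semi-lattice axioms, which completes the argument.

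I do not anticipate a genuine obstacle: the substance lives entirely in Properties~\ref{prop: idempotent}--\ref{prop: associativity}, and the theorem is essentially a labelling step that collects them under the standard algebraic definition. The only thing to be careful about is the bookkeeping---making explicit which of the three axioms each property supplies, and confirming that all three hold for the two operations simultaneously, so that the join and meet structures are obtained in parallel rather than by separate arguments.
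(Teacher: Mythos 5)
Your proposal is correct and follows essentially the same route as the paper, which likewise proves the theorem by observing that Properties~\ref{prop: idempotent}--\ref{prop: associativity} (idempotence, commutativity, associativity) are exactly the defining axioms of a semi-lattice. Your additional closure remark is a harmless extra that the paper leaves implicit.
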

\begin{proof}
  Properties~\ref{prop: idempotent}--\ref{prop: associativity} completely characterise semi-lattices
  \citep{gratzer2002, davey2002}.
\end{proof}

\begin{theorem}
  The marginal information contents form a distributive lattice
  $\langle\bm{x},h(\,\sqcup\,),h(\,\sqcap\,)\rangle$ under the $\max$ and $\min$ operators.
\end{theorem}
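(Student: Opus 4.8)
The plan is to invoke the standard equational characterisation of a (distributive) lattice and to observe that Properties~\ref{prop: idempotent}--\ref{prop: distributivity} supply exactly the axioms required. Recall that an algebraic structure $\langle L, \vee, \wedge\rangle$ is a lattice precisely when both $\vee$ and $\wedge$ are idempotent, commutative and associative, and when the two operations are linked by the absorption laws $a \vee (a \wedge b) = a$ and $a \wedge (a \vee b) = a$; it is a \emph{distributive} lattice when, in addition, each operation distributes over the other \citep{gratzer2002, davey2002}.

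First I would appeal to the preceding theorem, which already establishes---via Properties~\ref{prop: idempotent}--\ref{prop: associativity}---that the marginal information contents form a join semi-lattice $\langle \bm{x}, h(\,\sqcup\,)\rangle$ under $\max$ and, separately, a meet semi-lattice $\langle \bm{x}, h(\,\sqcap\,)\rangle$ under $\min$. This simultaneously supplies the idempotence, commutativity and associativity axioms for each of the two operations on the common underlying set $\bm{x}$.

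Next I would weld these two semi-lattices into a single lattice using Property~\ref{prop: absorption}. This is the conceptually crucial step, since two semi-lattice structures on the same set need not in general be compatible: the partial order induced by $\max$ and the dual of the order induced by $\min$ could in principle disagree. The absorption identities $h\big(x \sqcup (x \sqcap y)\big) = h(x)$ and $h\big(x \sqcap (x \sqcup y)\big) = h(x)$ are exactly the conditions forcing these two orders to coincide, and hence guaranteeing that $\langle \bm{x}, h(\,\sqcup\,), h(\,\sqcap\,)\rangle$ is a genuine lattice. Finally, Property~\ref{prop: distributivity} provides the two distributive laws, upgrading the lattice to a distributive one.

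The proof is therefore essentially bookkeeping: matching the five properties already established against the equational axioms for a distributive lattice and citing the relevant algebraic fact. I do not anticipate a genuine obstacle, since the real work was done in deriving Properties~\ref{prop: idempotent}--\ref{prop: distributivity} from the $\max$/$\min$ representation of the union and intersection information content. The only point that warrants an explicit remark is the compatibility issue resolved by absorption, which is precisely what converts two independently valid semi-lattices into one coherent lattice.
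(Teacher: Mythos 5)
Your proposal is correct and follows essentially the same route as the paper's own proof: it invokes the preceding semi-lattice theorem (Properties~\ref{prop: idempotent}--\ref{prop: associativity}), then uses Property~\ref{prop: absorption} to connect the join and meet semi-lattices into a lattice, and Property~\ref{prop: distributivity} to conclude distributivity, citing the standard equational characterisation \citep{gratzer2002, davey2002}. Your added remark on absorption as the compatibility condition forcing the two induced orders to coincide is a fair elaboration of what the paper leaves implicit, but it is the same argument.
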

\begin{proof}
  From Property~\ref{prop: absorption}, we have that the semi-lattices
  $\langle\bm{x},h(\,\sqcup\,)\rangle$ and $\langle\bm{x},h(\,\sqcap\,)\rangle$ are connected by
  absorption and hence form a lattice $\langle\bm{x},h(\,\sqcup\,),h(\,\sqcap\,)\rangle$.  By
  Property~\ref{prop: distributivity} this is a distributive lattice \citep{gratzer2002, davey2002}.
\end{proof}

Each way that a set of $n$ observers can share their information with Eve such that she has distinct
information corresponds to an element in partially ordered set, or more specifically the free
distributive lattice on $n$ generators \citep{gratzer2002}.  \fig{distributive_lattices} shows the
free distributive lattices generated by $n=2$ and $n=3$ observers.  The number of elements in this
lattice given by the ($n$)-th Dedekind number \citep[p.273]{comtet2012} (see also \citep{oeis}).
By the fundamental theorem of distributive lattices (or Birkhoff's representation theory), there is
isomorphism between the union information content and set union, and between the intersection
information content and set intersection \citep{birkhoff1940, stanley1997, gratzer2002, davey2002}.
It is this one-to-one correspondence that justifies our use of the terms union and intersection
information content for $n$ variables in general.  Every identity that holds in a lattice of sets
will have a corresponding identity in this distributive lattice of information contents.
\fig{distributive_lattices} also depicts the sets which correspond to each term in the lattice of
information contents. Just as the cardinality of sets is non-decreasing as we consider moving up
through the various terms in a lattice of sets, Eve's information is non-decreasing as we moving up
through the various terms in the corresponding lattice of information contents.  In particular, we
can quantify the unique information content that Eve gets from one method of information sharing
relative to any other method that is lower in the lattice.

Every property of the union and intersection information content that we have considered thus far
has been directly inherited by the union and intersection entropy.  However, there is one final
property is not inherieted by the entropies.  If Alice and Bob share their information with Eve,
then Eve's information is given by either Alice's or Bob's information, and similar for the
information that Eve could have gotten from either Alice or Bob.  As the subsequent theorem shows,
this property enables us to greatly reduce the number of distinct terms in the distributive lattice
for information content since any partially ordered set with a connex relation forms a total order.

\begin{prop}[Connexity]
  \label{prop:connex}
  The union and intersection information content are given by at least one of
  \begin{equation}
    h(x \sqcup y) = h(x) \; \text{and} \enspace h(x \sqcap y) = h(y), \quad\text{or}\quad
    h(x \sqcup y) = h(y) \; \text{and} \enspace h(x \sqcap y) = h(x) 
  \end{equation}
\end{prop}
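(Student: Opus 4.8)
The plan is to reduce the statement entirely to the closed-form expressions already established for the two operations. By Theorem~\ref{thm:eve} (equation \eq{union_info_content}) the union information content is $h(x \sqcup y) = \max\big(h(x), h(y)\big)$, and by \eq{max_min_two} the intersection information content is $h(x \sqcap y) = \min\big(h(x), h(y)\big)$. Since $h(x) = -\log p_X(x)$ and $h(y) = -\log p_Y(y)$ are real numbers, the whole claim rests on a single fact about the reals: the order relation $\leq$ is connex, meaning that for any two real numbers at least one of $h(x) \leq h(y)$ or $h(y) \leq h(x)$ holds.

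First I would invoke this connexity and split into the two (overlapping) cases. In the case $h(y) \leq h(x)$, the definition of the maximum gives $\max\big(h(x), h(y)\big) = h(x)$ and the definition of the minimum gives $\min\big(h(x), h(y)\big) = h(y)$, so that $h(x \sqcup y) = h(x)$ and $h(x \sqcap y) = h(y)$, which is the first alternative. Symmetrically, in the case $h(x) \leq h(y)$ the same evaluations yield $h(x \sqcup y) = h(y)$ and $h(x \sqcap y) = h(x)$, which is the second alternative. Since at least one of the two cases always applies, at least one of the two alternatives always holds, which is exactly what the proposition asserts.

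There is really no obstacle here: unlike the non-negativity of the mutual information, which required a separate argument, connexity is immediate once one has the $\max$/$\min$ representation. The only point worth flagging is the boundary case $h(x) = h(y)$, where both inequalities hold simultaneously and both alternatives are therefore true; this is consistent with the ``at least one'' phrasing and in fact explains why the statement cannot be strengthened to an exclusive disjunction. I would also note --- echoing the remark accompanying \fig{distributive_lattices} --- that this is precisely the property that fails for the entropies $H(X)$ and $H(Y)$, since taking expectations destroys the pointwise total order, which is why the lattice of entropies cannot be collapsed to a total order in the way the lattice of information contents can.
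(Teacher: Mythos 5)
Your proof is correct and follows exactly the reasoning the paper relies on: the paper states Property~\ref{prop:connex} without a separate formal proof precisely because, given $h(x \sqcup y) = \max\big(h(x),h(y)\big)$ and $h(x \sqcap y) = \min\big(h(x),h(y)\big)$, it is an immediate consequence of the totality of $\leq$ on the reals, which is the case split you perform. Your closing remarks on the non-exclusive disjunction when $h(x)=h(y)$ and on why the property fails for entropies also match the paper's discussion surrounding \fig{distributive_lattices} and \secRef{generalised}.
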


\section{Generalised Marginal Information Sharing}
\label{sec:generalised}

We will now use Properties~\ref{prop: idempotent}--\ref{prop:connex} to generalise the results of
Theorem~\ref{thm:eve} and \secRef{marginal}.

\begin{theorem}
  The marginal information contents are a totally ordered set under the $\max$ and $\min$ operators.
\end{theorem}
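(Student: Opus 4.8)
The plan is to show that the partial order induced by the distributive lattice $\langle\bm{x},h(\,\sqcup\,),h(\,\sqcap\,)\rangle$ is connex, and then to invoke the fact that a connex partial order is precisely a total order. First I would recall that the two preceding theorems establish that the marginal information contents form a lattice under $\max$ and $\min$, so they carry the canonical lattice order: declare $h(x) \preceq h(y)$ whenever $h(x \sqcap y) = h(x)$, or equivalently, by absorption (Property~\ref{prop: absorption}), whenever $h(x \sqcup y) = h(y)$. The lattice axioms of Properties~\ref{prop: idempotent}--\ref{prop: absorption} guarantee that $\preceq$ is reflexive (idempotence gives $h(x \sqcap x)=h(x)$), antisymmetric (if $h(x\sqcap y)=h(x)$ and $h(y\sqcap x)=h(y)$, then commutativity forces $h(x)=h(y)$), and transitive (from $h(x\sqcap y)=h(x)$ and $h(y\sqcap z)=h(y)$, associativity gives $h(x\sqcap z)=h\big((x\sqcap y)\sqcap z\big)=h\big(x\sqcap(y\sqcap z)\big)=h(x\sqcap y)=h(x)$). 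Hence $\langle\bm{x},\preceq\rangle$ is a genuine partial order.

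The key step is then to upgrade this to a total order by establishing connexity. For any pair of marginal information contents, Property~\ref{prop:connex} asserts that either $h(x \sqcup y)=h(x)$ with $h(x \sqcap y)=h(y)$, or $h(x \sqcup y)=h(y)$ with $h(x \sqcap y)=h(x)$. In the first case the defining condition yields $h(y) \preceq h(x)$; in the second it yields $h(x) \preceq h(y)$. Either way the two elements are comparable, so $\preceq$ is connex, and a connex partial order is by definition a total order.

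I expect no substantial obstacle here, since the heavy lifting has already been carried out by the lattice theorems and by Property~\ref{prop:connex}; the argument is essentially the assembly of these results. The only point warranting care is checking that the two formulations of the order relation---via $\sqcap$ and via $\sqcup$---genuinely coincide (which is exactly the content of the absorption identities) and that the case split in connexity is matched to these formulations correctly.

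As a confirming sanity check, I would note the more elementary route: since each $h(x_i) = -\log p_{X_i}(x_i)$ is a real number and the operators $\sqcup$ and $\sqcap$ are literally $\max$ and $\min$ on $\mathbb{R}$, the induced order is simply the restriction of the usual total order on the reals, which is manifestly total. This makes transparent why connexity holds and independently verifies the lattice-theoretic argument.
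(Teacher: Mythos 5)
Your proposal is correct and takes essentially the same route as the paper, whose entire proof is the one-line observation that a totally ordered set is a partially ordered set with the connex property: the partial order is supplied by the preceding lattice theorems and comparability by Property~\ref{prop:connex}. You simply make explicit the routine verifications (the canonical lattice order via $\sqcap$/$\sqcup$ and its reflexivity, antisymmetry and transitivity) that the paper delegates to the cited definition, and your sanity check that $\max$/$\min$ on real-valued information contents restrict the usual order on $\mathbb{R}$ is the same underlying fact.
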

\begin{proof}
  A totally ordered set is a partially ordered set with the connex
  property~\citep[p.2]{gratzer2002}.
\end{proof}

\begin{figure}[t]
  \centering
  \includegraphics{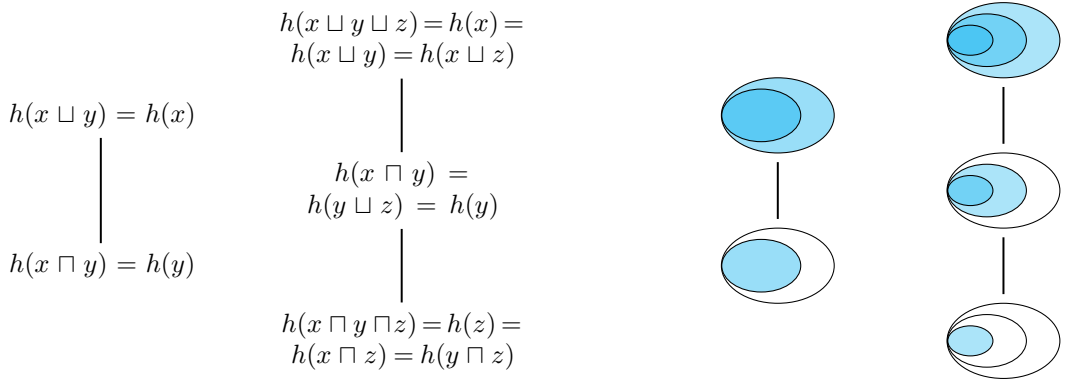}
  \caption{\emph{Left}: The total order of marginal information contents for two and three
    observers, whereby we have assumed that Alice's information $h(x)$ is greater than Bob's
    information $h(y)$, which is greater than Charlie's information $h(z)$.  It is important to note
    that taking the expectation value over these information contents for each realisation, which
    may each have a different total orders, yields entropies which are merely partially ordered.  It
    is for this reason that Property~\ref{prop:connex} does not apply to entropies.  \emph{Right}:
    The Venn diagrams corresponding to the total order for for two and three observers and their
    corresponding information contents.  Notice that the total order for two sets corresponds to the
    Venn diagram for information contents in \fig{union_info}.}
  \label{fig:total_order}
\end{figure}

\fig{total_order} shows the totally ordered sets generated by $n=2$ and $n=3$ observers, and also
depicts the corresponding sets.  Although the number of distinct terms has been reduced, Eve's
information is still non-decreasing as we move up through terms of the totally ordered set.  If we
now compare how much unique information Eve gets from a given method of information sharing relative
to any other method of information sharing which is equal or lower in the totally ordered set, then
we obtain a result which generalises \eq{uni_ic_x} and \eq{uni_ic_y} to consider more than two
observers.  Similarly, this total order enables us to generalise \eq{max_min_two} using the
maximum-minimum identity~\citep{sheldon2002}, which is a form of the principle of
inclusion-exclusion \citep{stanley1997} for a totally ordered set,
\begin{align}
  h(x_1 \sqcap x_2 \sqcap \ldots \sqcap x_n) 
  &= \min\big(h(x_1), h(x_2), \ldots, h(x_n)\big) \nn\\
  &= \sum_{k=1}^n (-1)^{k-1} \sum_{\substack{\bm{S}\subseteq\bm{X}\\|\bm{S}|=k }}
    \max\big(h(s_1), h(s_2), \ldots, h(s_k)\big) \nn\\
  &= \sum_{k=1}^n (-1)^{k-1} \sum_{\substack{\bm{S}\subseteq\bm{X}\\|\bm{S}|=k }}
    h(s_1 \sqcup s_2 \sqcup \ldots \sqcup s_k),
\end{align}
or conversely,
\begin{equation}
  \label{eq:pie_union}
  h(x_1 \sqcup x_2 \sqcup \ldots \sqcup x_n)
    = \sum_{k=1}^n (-1)^{k-1} \sum_{\substack{\bm{S}\subseteq\bm{X} \\
    |\bm{S}|=k }} h(s_1 \sqcap s_2 \sqcap \ldots \sqcap s_k). 
\end{equation}

Now that we have generalised the union and intersection information content, similar to
\secRef{marginal}, let us now consider taking the expectation value for each term in the
distributive lattice.  For every joint realisation $\bm{x}$ from $\bm{X}$, there is a corresponding
distributive lattice of information contents.  Hence, similar to \eq{union_entropy} and
\eq{uni_e_x}--\eq{intersection_entropy}, we can consider taking the expectation value of each term
in the lattice over all realisations.  Since the expectation is a linear operator, this yields a set
of entropies that are also idempotent, commutative, associative, absorptive and distributive, only
now over the random variables from~$\bm{X}$.  Thus, the information that Eve expects to gain from a
single realisation for a particular method of information sharing also corresponds to a term in a
free distributive lattice generated by $n$.  This distributive lattice for entropies can be seen in
\fig{distributive_lattices} by replacing $x$, $y$, $z$ and $h$ with $X$, $Y$, $Z$ and $H$
respectively.

Crucially, however, Property~\ref{prop:connex} does not hold for the entropies---it is not true that
Eve's expected information $H(X \sqcup Y)$ is given by either Alice's expected information $H(X)$ or
Bob's expected information $H(Y)$.  Thus, despite the fact that the distributive lattice of
information content can be reduced to a total order, the distributive lattice of entropies remains
partially ordered.  Although the information contents are totally ordered for every realisation,
this order is not in general the same for every realisation.  Consequently, when taking the
expectation value across many realisations to yield the corresponding entropies, the total order is
not maintained, and hence we are left with a partially ordered set of entropies.  Indeed, we already
saw the consequences of this result in \fig{union_info} whereby Alice's and Bob's information
content was totally ordered for any one realisation, but their expected information was partially
ordered.

\section{Multivariate Information Decomposition}
\label{sec:mid}

In \secRef{synergy}, we used the shared marginal information from \secRef{marginal} to decompose the
joint information content into four distinct components.  Our aim now is use the generalised notion
of shared information from the previous section to produce a generalised decomposition of the joint
information content.  To begin, suppose that Johnny observes the joint realisation $(x,y,z)$ while
Alice, Bob and Charlie observe the marginal realisations $x$, $y$ and $z$ respectively, and say that
Alice, Bob and Charlie share their information with Eve such that her information is given by
$h(x \sqcup y \sqcup z)$.  Clearly, Johnny has at least as much information as Eve,
\begin{equation}
  h(x, y, z) \geq h(x \sqcup y \sqcup z).
\end{equation}
Thus, we can compare how much more information Johnny has relative to Eve,
\begin{equation}
  \label{eq:synergistic_info_content_three}
  h(x \oplus y \oplus z) = h(x,y,z) - h(x \sqcup y \sqcup z)
  = \min\big( h(y,z|x), h(x,z|y), h(x,y|z) \big) \geq 0.
\end{equation}
This non-negative function generalises the earlier definition of the synergistic information content
\eq{synergistic_ic} such that it now quantifies how much information one gets from knowing the joint
probability $p_{XYZ}(x,y,z)$ relative to merely knowing the three marginal probabilities $p_X(x)$,
$p_Y(y)$ and $p_Z(z)$.  \fig{union_three} shows how this relationship can be represented using a
Venn diagram.

\begin{figure}[t]
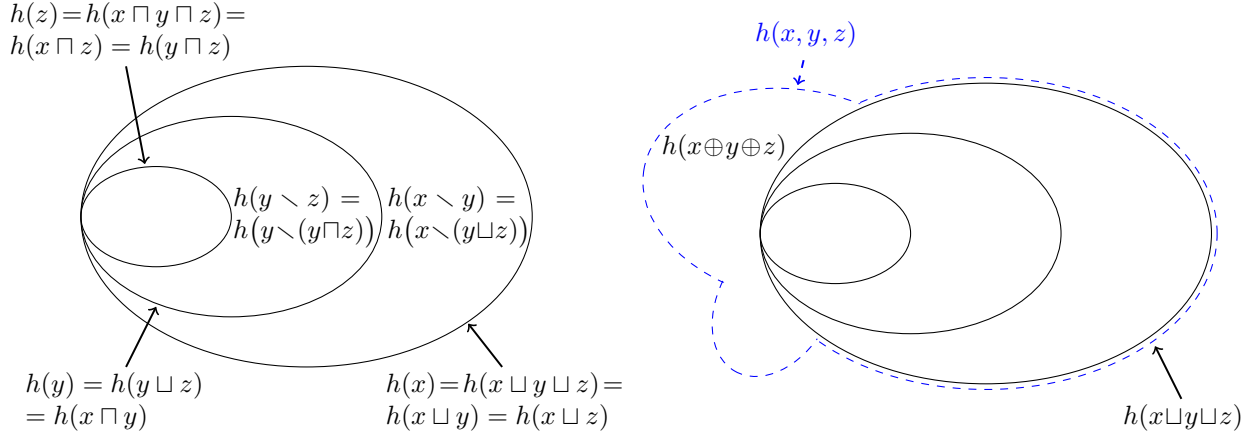

  \includegraphics{figs/union_entropy_three.tikz}
  \includegraphics{figs/synergistic_entropy_three.tikz}
  \caption{Similar to \fig{synergistic_info_content}, this Venn diagram shows how the synergistic
    information $h(x \oplus y \oplus z)$ can be defined by comparing the joint information content
    $h(x,y,z)$ to the union information content $h(x \sqcup y \sqcup z)$.  Note that, for this
    particular realisation, we are assuming that $h(x) > h(y) > h(z)$.}
  \label{fig:union_three}
\end{figure}

Now consider three more observers, Joan, Jonas, and Joanna, who observer the joint marginal
realisations $(x,y)$, $(x,z)$ and $(y,z)$, respectively.  Clearly, these additional observers
greatly increase the number of distinct ways in which marginal information might be shared with Eve.
For example, if Alice and Joanna share their information, then Eve's information is given by
$h\big(x \sqcup (y,z)\big)$.  Alternatively, if Joan and Jonas share their information, then Eve's
information is given by $h\big((x,y) \sqcup (x,z)\big)$.  Perhaps most interestingly, if Joan, Jonas
and Joanna share their information, then Eve's information is given by
$h\big((x,y) \sqcup (x,z) \sqcup (y,z)\big)$.  Moreover, we know that Johnny has at least as much
information as Eve has in this situation,
\begin{equation}
  h(x,y,z) \geq h\big((x,y) \sqcup (x,z) \sqcup (y,z)\big).
\end{equation}
Thus, by comparing how much more information Johnny has relative to Eve in this situation, we can
define a new type of synergistic information content that quantifies how much information one gets
from knowing the full joint realisation to merely knowing all of the pairwise marginal realisations,
\begin{equation}
  h\big((x,y) \oplus (x,z) \oplus (y,z)\big)
    = h\big(x,y,z\big) - h\big((x,y) \sqcup (x,z) \sqcup (y,z)\big)
    = \min\big(h(z|x,y),h(y|x,z),h(x|y,z)\big).
\end{equation}

\begin{samepage}
  Of course, these new ways to share joint information are not just restricted to the union
  information.  If Alice and Joanna share their information, then the information that Eve could
  have gotten from either is given by $h\big(x \sqcap (y,z)\big)$.  It is also worthwhile noting
  that this quantity is not less than the information that Eve could have gotten from either Alice's
  information or Bob’s and Charlie’s information,
  \begin{equation}
    h\big(x \sqcap (y,z)\big) \geq  h\big(x \sqcap (y \sqcup z)\big).
  \end{equation}
  Thus, we can also consider defining new types of synergistic information content associated with
  these this mixed type comparisons,
  \begin{equation}
    h\big(x \sqcap (y \oplus z)\big) = h\big(x\sqcap(y,z)\big) - h\big(x\sqcap(y \sqcup z)\big). 
\end{equation}
However, it is important to note that this quantity does not equal
$\min\big(h(x),\min(h(z|y),h(y|z))\big)$.
\end{samepage}

With all of these new ways to share joint marginal information, it is not immediately clear how we
should decompose Johnny's information.  Nevertheless, let us begin by considering the algebraic
structure of joint information content.  From the inequality \eq{shannon_inequalities_pw}, we know
that any pair of marginal information contents $h(x)$ and $h(y)$ are upper-bounded by the joint
information content $h(x,y)$.  It is also easy to see that the joint information content is
idempotent, commutative and associative.  Together, these properties are sufficient for establishing
that the algebraic structure of joint information content is that of a join
semi-lattice~\citep{gratzer2002} which we will denote by $\langle \bm{x}; h(\,,) \rangle$.
\fig{semi-lattices} shows the semi-lattices generated by $n=2$ and $n=3$ observers.

We now wish to establish the relationship between this semi-lattice of joint information content
$\langle \bm{x}; h(\,,) \rangle$ and the distributive lattice of shared marginal information
$\langle \bm{x}; h(\,\sqcup\,), h(\,\sqcap\,) \rangle$.  In particular, since our aim is to decompose
Johnny's information, consider the relationship between the join semi-lattice
$\langle \bm{x}; h(\,,) \rangle$ and the meet semi-lattice $\langle \bm{x}; h(\,\sqcap\,) \rangle$
which is also depicted in \fig{semi-lattices}.  In contrast to the semi-lattice of union information
content $\langle \bm{x}; h(\,\sqcup\,) \rangle$, the semi-lattice $\langle \bm{x}; h(\,,) \rangle$ is
not connected to the semi-lattice $\langle \bm{x}; h(\,\sqcap\,) \rangle$.  Although the intersection
information content absorbs the joint information content, since
\begin{equation}
 h\big(x \sqcap(x, y)\big) = h(x) \label{eq:capWithJoint}
\end{equation}
for all $h(x)$ and $h(y)$, the joint information content does not absorb the intersection
information content since $h\big(x, (x \sqcap y)\big)$ is equal to $h(x,y)$ for $h(x) \geq h(y)$,
i.e.\ is not equal to $h(x)$ as required for absorption.  Since the the join semi-lattice
$\langle \bm{x}; h(\,,) \rangle$ is not connected to the meet semi-lattice
$\langle \bm{x}; h(\,\sqcap\,) \rangle$ by absorption, their combined algebraic structure is not a
lattice.

\begin{figure}[t]
  \hspace{-10pt}
  \includegraphics{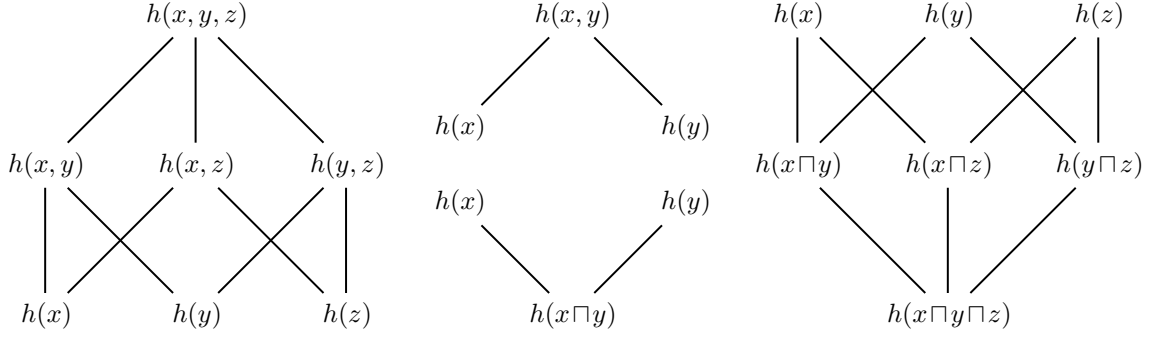}
  \caption{\emph{Top-middle and left}: The join semi-lattice
    $\big\langle \bm{h}\,; (\,,) \big\rangle$ for $n=2$ and $n=3$ marginal observers.  Johnny's
    information is always given by the joint information content at the top of the semi-lattice,
    while the information content of individuals such as Alice, Bob and Charlie who observe single
    realisations are found at the bottom of the semi-lattice.  The information content of joint
    marginal observers such as Joanna, Jonas and Joan are found in between these two extremities.
    \emph{Bottom-middle and right}: The meet semi-lattice $\big\langle \bm{h}\,; \sqcap \big\rangle$
    for $n=2$ and $n=3$ marginal observers.  Since these two semi-lattices are not connect by
    absorption, their combined structure is not a lattice.}
  \label{fig:semi-lattices}
\end{figure}

Despite the fact that the overall algebraic structure is not a lattice, there is a lattice
sub-structure $\langle \mc{A}(\bm{x}), \preceq \rangle$ within the general structure.  This
substructure is isomorphic to the redundancy lattice from the partial information
decomposition~\citep{williams2010} (see also \citep{lizier2018}), and its existence is a consequence
of the fact that the intersection information content absorbs the joint information content in
\eq{capWithJoint}.  In order to identify this lattice, we must first determine the reduced set of
elements $\mc{A}(\bm{x})$ upon which it is defined.  We begin by considering the set of all possible
joint realisations which is given by $\mc{P}_1(\bm{x})$ where
$\mc{P}_1(\bm{x}) = \mc{P}(\bm{x}) \sm \emptyset$.  Elements of this set $\mc{P}_1(\bm{x})$
correspond to the elements from the join semi-lattice $\langle \bm{x}; h(\,,) \rangle$, e.g.\ the
elements $\{x\}$ and $\{x,y\}$ correspond to $h(x)$ and $h(x,y)$, respectively.  In alignment with
\citet{williams2010}, we will call the elements of $\mc{P}_1(\bm{x})$ \emph{sources} and denote them
by $\bm{A}_1,\bm{A}_2,\ldots,\bm{A}_k$.  Next, we consider set of all possible \emph{collections of
  sources} which are given by the set $\mc{P}_1(\mc{P}_1(\bm{x}))$.  Each collection of sources
corresponds to an element of the meet semi-lattice
$\langle \mc{P}_1(\bm{x}) ; h(\,\sqcap\,) \rangle$, or a particular way in which we can evaluate the
intersection information content of a group of joint information contents.  For example, the
collections of sources $\{\{x\},\{y\}\}$ and $\{\{x\},\{y,z\}\}$ correspond to the $h(x \sqcap y)$
and $h\big(x \sqcap (y,z)\big))$, respectively.  Not all of these collections of sources are
distinct, however.  Since the intersection information content absorbs the joint information
content, we can remove the element $\{\{x\},\{x,y\}\}$ corresponding to $h\big(x \sqcap (x,y)\big)$
as this information is already captured by the element $\{\{x\}\}$ corresponding to $h(x)$.  In
general, we can remove any collection of sources that corresponds to the intersection information
content between a source $\bm{A}_i$ and any source $\bm{A}_j$ that is in the down-set
$\downarrow\!  \bm{A}_i$ with respect to the join semi-lattice
$\big\langle \bm{x}; h(\,,) \big\rangle$.  (A definition of the down-set can be found in
\citep{davey2002}. Informally, the down-set $\downarrow\! \bm{A}$ is the set of all elements that
precede $\bm{A}$.) By removing all such collections of sources, we get the following reduced set of
collections of sources,
\begin{equation}
  \label{eq:red_lattice}
  \mc{A}(\bm{x}) = \{ \alpha \in \mc{P}_1(\mc{P}_1(\bm{x})) :
    \forall \, \bm{A}_i,\, \bm{A}_j \in \alpha, \, \bm{A}_i \not\subset \bm{A}_j \}.
\end{equation}
Formally, this set corresponds to the set of antichains on the lattice
$\langle\mc{P}_1(\bm{x}),\subseteq \rangle$, excluding the empty set \citep{williams2010}.

Now that we have determined the elements upon which the lattice sub-structure is defined, we must
show that they indeed form a lattice.  Recall that when constructing the set $\mc{A}(\bm{x})$, we
first considered the ordered elements of the semi-lattice
$\big\langle \bm{x}; h(\,,) \big\rangle$ and then subsequently consider the ordered elements of the
semi-lattice $\big\langle \mc{P}_1(\bm{x}) ; h(\,\sqcap\,) \big\rangle$.  Thus, we need to show
that these two orders can be combined together into one new ordering relation over the set
$\mc{A}(\bm{x})$.  This can be done by extending the approach underlying the construction of the set
$\mc{A}(\bm{x})$ to consider any pair of collections of sets $\alpha$ and $\beta$ from
$\mc{A}(\bm{x})$.  In particular, the collection of sets $\beta$ precedes the collection of sets
$\alpha$ if and only if for every source $\bm{B}$ from $\beta$, there exists a source $\bm{A}$ from
$\alpha$ such that $\bm{A}$ is in the down-set $\downarrow\! \bm{B}$ with respect to the
join-semi-lattice $\big\langle \bm{x}; h(\,,) \big\rangle$, or formally,
\begin{equation}
  \label{eq:red_lattice_order}
  \forall \, \alpha, \beta \in \mc{A}(\bm{x}),
  (\alpha \preceq \beta \iff \forall \, \bm{B} \in \beta,
    \exists \,\bm{A} \in \alpha, \bm{A} \subseteq \bm{B}).
\end{equation}
The fact that $\langle\mc{A}(\bm{x}),\preceq\rangle$ forms a lattice was proved by
\citet{crampton2001,crampton2000} where the corresponding lattice is denoted
$\langle\mc{A}(X), \preceq^\prime\rangle$ in their notation.  Furthermore, they showed that this
lattice is isomorphic to the distributive lattices, and hence the number of elements in the set
$\mc{A}(\bm{x})$ for $n$ marginal observers is also given by the ($n$)-th Dedekind number
\citep[p.273]{comtet2012} (see also \citep{oeis}).  \citeauthor{crampton2000} also provided the meet
$\wedge$ and join $\vee$ operations for this lattice, which are given by
\begin{align}
  \alpha \wedge \beta &= \underline{\alpha \sqcup \beta},  \\
  \alpha \vee \beta &= \underline{\uparrow \! \alpha \cap \uparrow \! \beta},
\end{align}
where $\underline{\alpha}$ denotes the set of minimal elements of $\alpha$ with respect to the
semi-lattice $\big\langle \bm{x}; h(\,,) \big\rangle$.  (A definition of the set of minimal elements
can be found in \citep{davey2002}. Informally, $\underline{\alpha}$ is the set of sources of
$\alpha$ that are not preceded by any other sources from $\alpha$ with respect to the semi-lattice
$\big\langle \bm{x}; h(\,,) \big\rangle$.)  This lattice $\langle\mc{A}(\bm{x}),\preceq\rangle$ is
the aforementioned sub-structure that is isomorphic to the \emph{redundancy lattice} from
\citet{williams2010}.  However, as it is a lattice over information contents, it is actually
equivalent to the specificity lattice from \citep{finn2018pointwise}.  \fig{red_lattice} depicts the
redundancy lattice of information contents for $n=2$ and $n=3$ marginal observers.

\begin{figure}[p]
  \hspace{-35pt}
  \includegraphics{figs/red_lattices.tikz}
  \caption{ \emph{Top left}: The redundancy lattices $\langle \mc{A}(\bm{x}), \preceq \rangle$ of
    information contents for two and three and three observers.  Each note in the lattice
    corresponds to an element in $\mc{A}(\bm{x})$ from \eq{red_lattice}, while the ordering between
    elements is given by $\preceq$ from \eq{red_lattice_order}.  \emph{Bottom right}: The partial
    information contents $h_\partial(\alpha)$ corresponding to the redundancy lattices of
    information contents for two and three observers.}
  \label{fig:red_lattice}
\end{figure}

\begin{samepage}
  Similar to how Eve's information is non-decreasing as we move up through the terms of the
  distributive lattice of shared information, the redundancy lattice of information contents enables
  to see that, for example, the information that Eve could have gotten from either Alice or Joanna
  $h\big(x \sqcap (y,z)\big)$ is no less than the information that Eve could have gotten from Alice
  or Bob $h(x \sqcap y)$.  Thus, by taking the information $h(\alpha)$ associated with the
  collection of sources $\alpha$ from $\mc{A}(\bm{x})$ and subtracting from it the information
  $h(\alpha_i)$ associated with any collection of sources $\alpha_i$ from the down-set
  $\downarrow \! \alpha$, we can evaluate the unique information $h(\alpha \ssm \alpha_i)$ provided
  by $\alpha$ relative to $\alpha_i$.  Moreover, as per \citet{williams2010}, we can derive a
  function that quantifies the \emph{partial information content} $h_\partial(\alpha)$ associated
  with the collection of sources $\alpha$ that is not available in any of the collections of sources
  that are covered by $\alpha$.  (The set of collections of sources that are covered by $\alpha$ is
  denoted $\alpha^-$. A definition of the covering relation is provided in \citep{davey2002}.
  Informally, $\alpha^-$ is the set collections of sources that immediately precede $\alpha$.)
  Formally, this function corresponds to the M\"obius inverse of $h$ on the redundancy lattice
  $\langle \mc{A}(\bm{x}), \preceq \rangle$, and can be defined implicitly by
\begin{equation}
  \label{eq:implicit}
  h(\alpha) = \sum_{\beta \preceq \alpha} h_\partial(\beta).
\end{equation}
By subtracting away the partial information terms that strictly precede $\alpha$ from both sides, it
is easy to see that the partial information content $h_\partial(\alpha)$ can be calculated
recursively from the bottom of the redundancy lattice of information contents,
\begin{equation}
  \label{eq:implicit2}
  h_\partial(\alpha) = h(\alpha) - \sum_{\beta \prec \alpha} h_\partial(\beta),
\end{equation}
\end{samepage}

As the following theorem shows, the partial information content $h_\partial(\alpha)$ can be written
in closed-form.

\begin{theorem}
  \label{thm:closed_form}
  The partial information content $h_\partial(\alpha)$ is given by
  \begin{align}
    \label{eq:closed_form}
    h_\partial(\alpha)
      &= h(\alpha) - h(\alpha^-_1 \sqcup \alpha^-_2 \sqcup \ldots \sqcup \alpha^-_{|\alpha^-|}) \nn\\
      &= h(\alpha) - \max\big(h(\alpha^-_1),h(\alpha^-_2),\ldots,h(\alpha^-_{|\alpha^-|})\big), 
  \end{align}
  where each $\alpha^-_i$ is a collection of sets from $\alpha^-$.
\end{theorem}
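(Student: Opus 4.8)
The plan is to work from the recursive (Möbius-inverse) form \eq{implicit2} and show that the subtracted sum collapses exactly to the union information content of the covers. Concretely, it suffices to prove
\begin{equation}
  \sum_{\beta \prec \alpha} h_\partial(\beta) = \max\big(h(\alpha^-_1),\ldots,h(\alpha^-_{|\alpha^-|})\big),
\end{equation}
after which \eq{implicit2} gives the first line of \eq{closed_form}, and the second line follows because $h(\alpha^-_1 \sqcup \cdots \sqcup \alpha^-_{|\alpha^-|}) = \max_i h(\alpha^-_i)$ by the generalised union information content.

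First I would record three structural facts. (i) In the finite redundancy lattice $\langle\mc{A}(\bm{x}),\preceq\rangle$, every element strictly below $\alpha$ lies below some cover of $\alpha$, so the strict down-set decomposes as $\{\beta : \beta \prec \alpha\} = \bigcup_i \downarrow\!\alpha^-_i$; this follows by descending a maximal chain from $\alpha$ to $\beta$ and taking the element just below $\alpha$. (ii) For any index set $J$, the intersection of down-sets is the down-set of the meet, $\bigcap_{i\in J}\downarrow\!\alpha^-_i = \downarrow\!\big(\bigwedge_{i\in J}\alpha^-_i\big)$, which is immediate from $\gamma \preceq \bigwedge_{i\in J}\alpha^-_i \iff \gamma\preceq\alpha^-_i$ for all $i\in J$. (iii) The information function is a meet-homomorphism onto $(\mathbb{R},\min)$: since $h(\alpha)=\min_{\bm{A}\in\alpha}h(\bm{A})$ and the Crampton--Loizou meet $\alpha\wedge\beta=\underline{\alpha\sqcup\beta}$ retains only the minimal sources, one obtains $h\big(\bigwedge_{i\in J}\alpha^-_i\big)=\min_{i\in J}h(\alpha^-_i)$.

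Then I would apply inclusion--exclusion to the overlapping down-sets of the covers. Summing $h_\partial$ over the union from fact (i) and using fact (ii) together with the defining identity \eq{implicit}, each intersection term evaluates as $\sum_{\beta\in\bigcap_{i\in J}\downarrow\alpha^-_i}h_\partial(\beta)=h\big(\bigwedge_{i\in J}\alpha^-_i\big)$, so that
\begin{equation}
  \sum_{\beta\prec\alpha}h_\partial(\beta)
    = \sum_{\emptyset\neq J\subseteq\{1,\ldots,|\alpha^-|\}}(-1)^{|J|-1}\,h\Big(\bigwedge_{i\in J}\alpha^-_i\Big).
\end{equation}
By fact (iii) the summand equals $\min_{i\in J}h(\alpha^-_i)$, and the resulting alternating sum of minima is precisely the maximum--minimum identity \eq{pie_union} from \secRef{generalised} applied to the real numbers $h(\alpha^-_1),\ldots,h(\alpha^-_{|\alpha^-|})$, which sums to $\max_i h(\alpha^-_i)$. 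This yields the required collapse and hence \eq{closed_form}.

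I expect the main obstacle to be fact (iii): verifying cleanly that $h$ sends redundancy-lattice meets to minima. This is where the concrete form $h(\alpha)=\min_{\bm{A}\in\alpha}h(\bm{A})$ and the meet $\alpha\wedge\beta=\underline{\alpha\sqcup\beta}$ must be reconciled, using that a non-minimal source is a superset of a minimal one and therefore, by \eq{shannon_inequalities_pw}, never attains the minimum of $h$. The combinatorial facts (i) and (ii) are routine finite-lattice bookkeeping, and once (iii) is secured the remaining computation is exactly the inclusion--exclusion already packaged as the maximum--minimum identity.
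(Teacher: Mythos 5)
Your proposal is correct and follows essentially the same route as the paper's own proof: both decompose the strict down-set $\dot{\downarrow}\,\alpha$ as $\bigcup_{i}\downarrow\!\alpha^-_i$, apply inclusion--exclusion to the set-additive sum of $h_\partial$, use the lattice identity $\bigcap_{i\in J}\downarrow\!\alpha^-_i = \downarrow\!\big(\bigwedge_{i\in J}\alpha^-_i\big)$ together with \eq{implicit} to evaluate each intersection term as $h$ of a meet, and collapse the resulting alternating sum of minima via the maximum--minimum identity \eq{pie_union}. Your fact (iii) is the one point the paper states without elaboration (``the meet operation is given by the intersection information content''), and your verification of it---that non-minimal sources in $\underline{\alpha \sqcup \beta}$ are supersets of minimal ones and so never attain the minimum of $h$---is a sound and welcome explicit justification of that step.
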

\begin{proof}
  For $\bm{S} \subseteq \mc{A}(\bm{x})$, define the set-additive function
  \begin{equation}
    \label{eq:saf}
    f(\bm{S}) = \sum_{\beta \in \bm{S}} h_\partial(\beta).
  \end{equation}
  From \eq{implicit}, we have that $h(\alpha) = f(\downarrow \! \alpha)$.  The partial information
  can then by subtracting the set additive on the down-set $\downarrow \! \alpha$ from the set
  additive function on the strict down-set $\dot{\downarrow} \alpha$,
  \begin{equation}
    h_\partial(\alpha) = f(\downarrow \! \alpha) - f(\dot{\downarrow}\, \alpha)
      = f(\downarrow \! \alpha) - f\big( \bigcup_{\beta \in \alpha^-} \downarrow \! \beta \big),  
  \end{equation}
  By applying the principle of inclusion-exclusion \citep{stanley1997}, we get that
  \begin{equation}
    h_\partial(\alpha) = f(\downarrow \! \alpha)
      - \sum_{k=1}^{|\alpha^-|} (-1)^{k-1} \sum_{\substack{\bm{S}\subseteq\bm{\alpha^-} \\
    |\bm{S}|=k }} f\Big( \bigcap_{\sigma \in \bm{S}} \downarrow \! \sigma \Big). 
  \end{equation}
  For any lattice $L$ and $A \subseteq L$, we have that $\bigcap_{a \in A} \downarrow \! a$ is equal
  to $\downarrow \! ( \bigwedge A )$ \citep[p.57]{davey2002}, and since the meet operation is given
  by the intersection information content, we have that
  \begin{align}
    h_\partial(\alpha) &= f(\downarrow \! \alpha)
      - \sum_{k=1}^{|\alpha^-|} (-1)^{k-1} \sum_{\substack{\bm{S}\subseteq\bm{\alpha^-} \\
      |\bm{S}|=k }} h(s_1 \sqcap s_2 \sqcap \ldots \sqcap s_k) \nn\\
    &= h(\alpha) - h(\alpha_1 \sqcup \alpha_2 \sqcup \ldots \sqcup \alpha_{|\alpha^-|}), 
  \end{align}
  where the final step has been made using \eq{pie_union} and \eq{saf}.
\end{proof}

The closed-form solution \eq{closed_form} from Theorem~\ref{thm:closed_form} is the same as the
closed-form solution presented in Theorem~A2 from \citet{finn2018pointwise}.  This, together with
the aforementioned fact that the lattice $\langle\mc{A}(\bm{x}),\preceq\rangle$ is equivalent to the
specificity lattice, means that each partial information content $h_\partial(\alpha)$ is equal to
the partial specificity $i^+_\partial(\alpha \rightarrow t)$ from (A22) of
\citep{finn2018pointwise}.  As such, the partial information decomposition present in this paper
is equivalent to the pointwise partial information decomposition presented in
\citep{finn2018pointwise}.

Let us now use the closed-form solution \eq{closed_form} from Theorem~\ref{thm:closed_form} to
evaluate the partial information contents for the $n=2$ redundancy lattice of information contents.
Starting from the bottom, we get the intersection information content,
\begin{equation}
  h_\partial(x \sqcap y) = h(x \sqcap y), 
\end{equation}
followed by the unique information contents,
\begin{alignat}{4}
  &h_\partial(x) &&= h(x) &&- h(x \sqcap y) &&= h(x \ssm y), \\
  &h_\partial(y) &&= h(y) &&- h(x \sqcap y) &&= h(y \ssm x),
\end{alignat}
and finally, the synergistic information content, 
\begin{equation}
  h_\partial(x,y) = h(x,y) - h(x \sqcup y) = h(x \oplus y). 
\end{equation}
It is clear that these partial information contents recover the intersection, unique and synergistic
information contents from Sections \ref{sec:marginal} and \ref{sec:synergy}.  Moreover, by inserting
these partial terms back into \eq{implicit} for $\alpha=\{\{x,y\}\}$, we recover the earlier
decomposition \eq{bivar_decomp} of Johnny's information,
\begin{align}
  h(x,y) = h_\partial(x \sqcap y) + h_\partial(x) + h_\partial(y) + h_\partial(x,y)
  = h(x \sqcap y) + h(x \ssm y) + h(y \ssm x) + h(x \oplus y).
\end{align}

Of course, our aim is to generalise this result such that we can decompose the joint information
content for an arbitrary number of marginal realisations.  This can be done by first evaluating the
partial information contents over the redundancy lattice corresponding to $n$ marginal realisations,
and then subsequently inserting the results back into \eq{implicit} for
$\alpha=\{\{x_1, x_2, \ldots ,x_n\}\}$.  For example, we can invert the $n=3$ redundancy lattice of
information contents which yields the partial information contents shown in \fig{red_lattice}.  (The
inversion is evaluated in the Appendix.)  When inserted back into \eq{implicit}, we get the
following decomposition for Johnny's information,
\begin{align}
  \label{eq:trivar_decomp}
  h(x,y,z)
    &= h(x \sqcap y \sqcap z) \nn\\
    &\quad + h\big((x \sqcap y) \ssm z\big) + h\big((x \sqcap z) \ssm y\big)
      + h\big((y \sqcap z) \ssm x\big) \nn\\
    &\quad + h\big(x \sqcap (y \oplus z)\big) + h\big(y \sqcap (x \oplus z)\big)
      + h\big(z \sqcap (x \oplus y)\big) \nn\\
    &\quad + h\big(x \ssm (y,z)\big) + h\big(y \ssm (x,z)\big) + h\big(z \ssm (x,y)\big)
    + h\big((x \oplus y) \sqcap (x \oplus z) \sqcap (y \oplus z)\big) \nn\\
    &\quad + h\big((x \oplus y) \sqcap (x \oplus z) \ssm (y,z)\big)
      + h\big((x \oplus y) \sqcap (y \oplus z) \ssm (x,z)\big)
      + h\big((x \oplus z) \sqcap (y \oplus z) \ssm (x,y)\big) \nn\\
    &\quad + h\big((x \oplus y) \ssm ((x,z) \sqcup (y,z))\big)
      + h\big((x \oplus z) \ssm ((x,y) \sqcup (y,z))\big)
      + h\big((y \oplus z) \ssm ((x,y) \sqcup (x,z))\big) \nn\\
  &\quad + h\big((x,y) \oplus (x,z) \oplus (y,z)\big).
\end{align}

Finally, we can also consider taking the expectation value of each term in the redundancy lattice of
information contents.  Since the expectation is a linear and monotonic operator, the resulting
expectation values will inherit the structure of the redundancy lattice of information contents and
so form a redundancy lattice of entropies, i.e.\ \fig{red_lattice} with $x$, $y$, $z$ and $h$
replaced by $X$, $Y$, $Z$ and $H$ respectively.  By inverting the $n=2$ redundancy lattice of
entropies, we can recover the decomposition \eq{bivar_decomp_avg} from \fig{synergistic_entropy}.
Furthermore, inverting the $n=3$ lattice generalises this result and is depicted in \fig{pid_3}.

\begin{figure}[p]
  \includegraphics{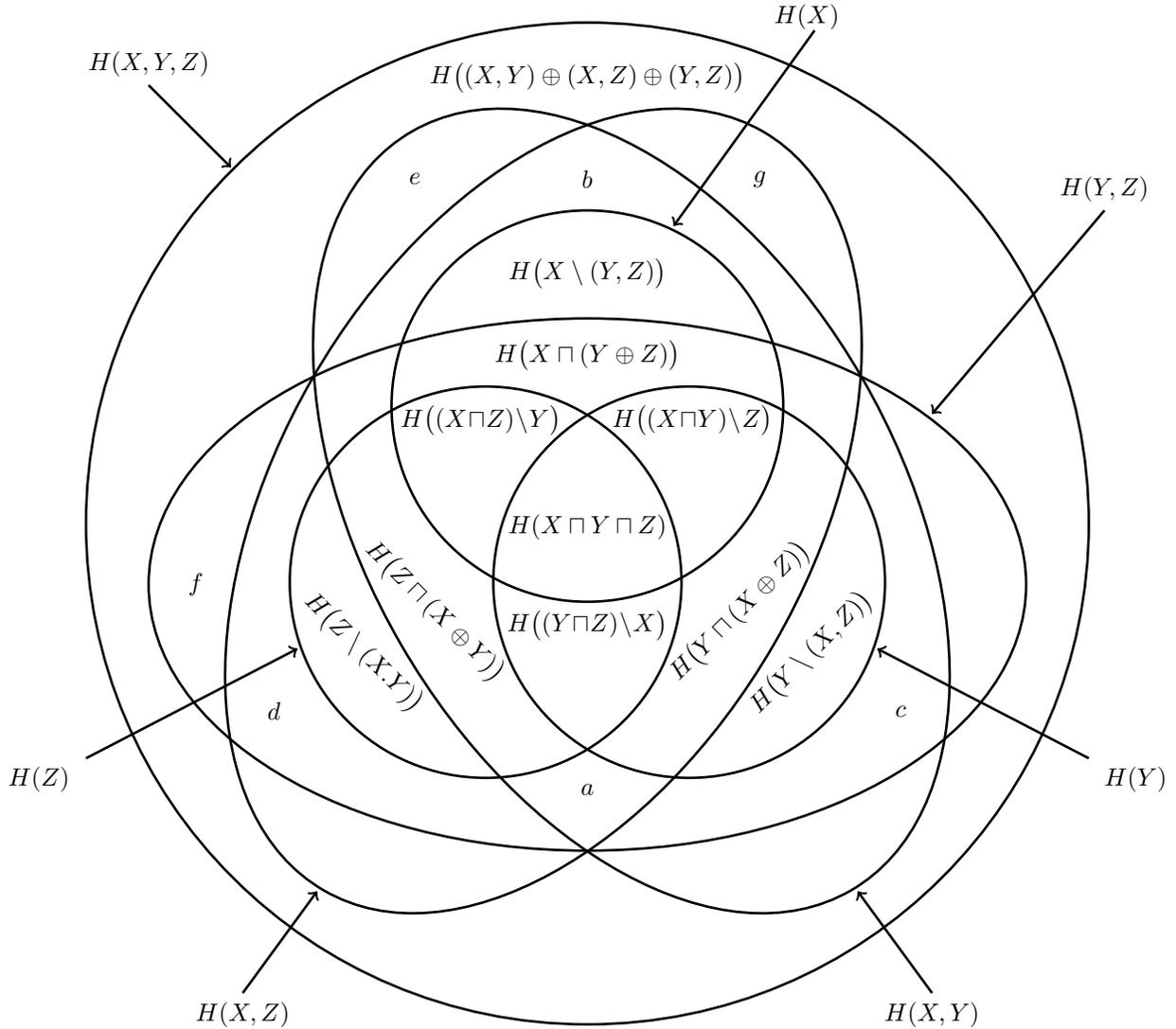}
  \caption{This Venn provides a visual representation of the decomposition of the joint entropy
    $H(X,Y,Z)$.  This decomposition is given by replacing $x$, $y$, $z$ and $h$ with $X$, $Y$, $Z$
    and $H$ in \eq{trivar_decomp}, respectively.}
  \label{fig:pid_3}
\end{figure}

\section{Union and Intersection Mutual Information}
\label{sec:info}

Suppose that Alice, Bob and Johnny are now additionally and commonly observing the variable $Z$.
When a realisation $(x,y,z)$ occurs, Alice's information given $z$ is given by the conditional
information content $h(x|z)$, while Bob's conditional information is given by $h(y|z)$ and Johnny's
conditional information is given by $h(x,y|z)$.  By using the same argument as in
\secRef{marginal}, it is easy to see that Eve's conditional information given $z$ is given by the
conditional union information content,
\begin{equation}
  \label{eq:cond_union_info_content}
  h(x \sqcup y | z) = \max\big(h(x|z), h(y|z) \big).  
\end{equation}
Likewise, we can define the conditional unique information contents and conditional intersection
information content, respectively,
\begin{alignat}{2}
  &h(x \ssm y | z) &&= h(x \sqcup y | z) - h(y | z) = \max\big(h(x|z) - h(y|z),0\big), \\
  &h(y \ssm x | z) &&= h(x \sqcup y | z) - h(x | z) = \max\big(0,h(y|z) - h(x|z)\big), \\
  &h(x \sqcap y|z)   &&= h(x|z) + h(y|z) - h(x \sqcup y|z) = \min\big(h(x|z), h(y|z)\big).
\end{alignat}
Furthermore, since Johnny's conditional information $h(x,y|z)$ is no less than Eve's conditional
information content $h(x \sqcup y | z)$, we can also define the conditional synergistic information
content,
\begin{equation}
  h(x \oplus y|z) = h(x,y|z) - h(x \sqcup y | z) = \min\big(h(y|x,z), h(x|y,z)\big).
\end{equation}
Similar to \eq{bivar_decomp}, we can decompose Johnny's conditional information $h(x,y|z)$ into the
following components,
\begin{equation}
  \label{eq:bivar_decomp_cond}
  h(x,y|z) = h(x \sqcup y|z) + h(x \oplus y|z)
  = h(x \sqcap y|z) + h(x \ssm y|z) + h(y \ssm x|z) + h(x \oplus y|z).
\end{equation}
Moreover, similar to \eq{mi_diff}, the conditional mutual information content is equal to the
difference between the conditional intersection information content and the conditional synergistic
information content,
\begin{equation}
  \label{eq:mi_diff_cond}
  i(x;y|z) = h(x|z) + h(y|z) - h(x,y|z) = h(x \sqcap y|z) - h(x \oplus y|z).
\end{equation}
Notice that all of the above definitions directly correspond to the definitions of the unconditioned quantities, with all probability distributions conditioned on $z$ here.

% \begin{figure}[t]
%   \hspace{5pt}
%   \begin{minipage}{0.4\textwidth}
%     \raggedright
%     \includegraphics{figs/cond_union_entropy_two.tikz}
%   \end{minipage}
%   \begin{minipage}{0.6\textwidth}
%     \raggedleft
%     \includegraphics{figs/synergistic_entropy_two.tikz}
%   \hspace{7pt}
%   \end{minipage}
%   \caption{\emph{Left}: Do this.  \emph{Right}: Don't do this..}
% \end{figure}

Let us now consider how much information each of our observers have about the commonly observed
realisation~$z$.  The information that Alice has about $z$ from observing $x$ is given by the mutual
information content,
\begin{equation}
  \label{eq:alice_mi}
  i(x;z) = h(x) - h(x|z).
\end{equation}
Similarly, Bob's information about $z$ is given by $i(y;z)$, while Johnny's information is given by
the joint mutual information content $i(x,y;z)$.  Thus, the question naturally arises---are we able
to quantify how much information Eve has about the realisation $z$ from knowing Alice's and Bob's
shared information?

Clearly, we could consider defining the union mutual information content,
\begin{equation}
  \label{eq:union_mic}
  i(x \sqcup y ; z) = h(x \sqcup y) - h(x \sqcup y | z).
\end{equation}
It is important to note that, while the mutual information can be defined in three different ways
$i(x,z) = h(x) - h(x|z) = h(x) + h(z) - h(x,z) = h(z) - h(z|x)$, there is only one way in which one
can define this function.  (Indeed, this point aligns well with our argument based on exclusions
presented in \citep{finn2018probability}.)  Similar to \eq{union_mic}, we could consider
respectively defining the unique mutual information contents, the intersection mutual information
content and synergistic mutual information content,
\begin{align}
  i(x \ssm y ; z) &= h(x \ssm y) - h(x \ssm y | z), \\
  i(y \ssm x ; z) &= h(y \ssm x) - h(y \ssm x | z), \\
  i(x \sqcap y ; z) &= h(x \sqcap y) - h(x \sqcap y | z), \\ 
  i(x \oplus y ; z) &= h(x \oplus y) - h(x \oplus y | z). \label{eq:synergistic_mic}
\end{align}
Just like the mutual information content \eq{mutual_info_content}, there is nothing to suggest that
these quantities are non-negative.  Of course, the mutual information or expected mutual information
content \eq{mutual_info} is non-negative.  Thus, with this in mind, consider defining the union
mutual information
\begin{equation}
  \label{eq:union_mi}
  I(X \sqcup Y ; Z) = \mathrm{E}_{XYZ} \big[i(x \sqcup y;z)\big].
\end{equation}
However, there is nothing to suggest that this function is non-negative.  Consequently, it is
dubious to claim that this function represents Eve's expected information about $Z$, and is
similarly fallacious to say that Eve's information about $z$ is given by the union mutual
information content \eq{union_mic}.  Indeed, by inserting the definitions \eq{union_info_content}
and \eq{cond_union_info_content} into \eq{union_mic}, it is easy to see why it is difficult to
interpret these functions,
\begin{align}
  i(x \sqcup y ; z) &= \max\big(h(x),\, h(y)\big) - \max\big(h(x|z),\, h(y|z)\big) \nn\\
    &= \max\Big(\min\big(i(x;z),\,h(x)-h(y|z)\big),\,\min\big(h(y)-h(x|z),\,i(y;z)\big)\Big).
\end{align}
That is, the union mutual information content can mix the information content provided by one
realisation with the conditional information content provided by another.  Thus, there is no
guarantee that this function's expected value will be non-negative.  It is perhaps best to interpret
this function as being a difference between two surprisals, rather than a function which represent
information.  Of course, similar to the multivariate mutual information \eq{mmi}, the union
mutual information can be used a summary quantity provided one is careful not to misinterpret its
meaning.  The same is true for the unique mutual informations,
intersection mutual information and synergistic mutual information, which we can similarly define,
\begin{align}
  I(X \sm Y ; Z) &= \mathrm{E}_{XYZ} \big[i(x \ssm y;z)\big], \\
  I(Y \sm X ; Z) &= \mathrm{E}_{XYZ} \big[i(y \ssm x;z)\big], \\
  I(X \sqcap Y ; Z) &= \mathrm{E}_{XYZ} \big[i(x \sqcap y;z)\big], \\
  I(X \oplus Y ; Z) &= \mathrm{E}_{XYZ} \big[i(x \oplus y;z)\big].
\end{align}

Despite lacking the clear interpretation that we had for the information contents, these functions
share a similar algebraic structure.  For example, by using \eq{bivar_decomp} and
\eq{bivar_decomp_cond}, we can decompose the mutual information content into the following
components,
\begin{equation}
  i(x,y;z) =  i(x \sqcap y ; z) + i(x \ssm y ; z) + i(y \ssm x ; z) +  i(x \oplus y ; z), 
\end{equation}
which is similar to the earlier decomposition of the joint entropy \eq{bivar_decomp}.  Moreover,
similar to \eq{mi_diff}, by using \eq{mi_diff} and \eq{mi_diff_cond}, we get that the multivariate
mutual information content is given by the difference between the intersection mutual information
content and the synergistic mutual information content,
\begin{align}
  i(x;y;z) &= i(x;y) - i(x;y|z)
             = h(x \sqcap y) - h(x \oplus y) - h(x \sqcap y) + h(x \oplus y|z) \nn\\
           &= i(x \sqcap y; z) - i(x \oplus y; z).
\end{align}
Of course, since the expectation value is a linear operator, both of these results can be carried
over to the joint mutual information.  Hence, the mutual information can be decomposed into the
following components,
\begin{equation}
  I(X,Y;Z) =  I(X \sqcap Y ; Z) + I(X \sm Y ; Z) + I(Y \sm X ; Z) +  I(X \oplus Y ; Z).
\end{equation}
while the the multivariate mutual information is equal to the intersection mutual information minus
the synergistic mutual information,
\begin{align}
  I(X;Y;Z) &= I(X;Y) - I(X;Y|Z)
             = H(X \sqcap Y) - H(X \oplus Y) - H(X \sqcap Y) + H(X \oplus Y|Z) \nn\\
           &= I(X \sqcap Y; Z) - I(X \oplus Y; Z).
\end{align}
This latter result aligns Williams and Beer's prior result that the multivariate mutual information
conflates redundant and synergistic information \citep[eq. 14]{williams2010}.

\section{Conclusion}
\label{conclusion}

The main aim of this paper has been to understand and quantify the distinct ways that a set of
marginal observers can share their information with some non-observing third party.  To accomplish
objective, we examined the distinct ways in which two marginal observers, Alice and Bob, can share
their information with the non-observing individual, Eve, and introduced several novel measures of
information content:\ the union, intersection and unique information contents.  We then investigated
the algebraic structure of these new measures of shared marginal information and showed that the
structure of shared marginal information is that of a distributive lattice.  Furthermore, by using
the fundamental theorem of distributive lattices, we showed that these new measures are isomorphic
to the set union and intersections.  This isomorphism is similar to Yeung's correspondence between
multivariate mutual information and signed measure \citep{yeung1991, yeung2008}.  However, in
contrast to Yeung's correspondence, the measures of information content presented in this paper are
non-negative.  Moreover, these measures maintain a clear operational meaning regardless of the
number of realisations or variables involved.  (This is, of course, excepting the mutual information
contents presented in \secRef{info} which are not non-negative; whether or not one should use such
quantities is open to debate.)

The secondary objective of this paper has been to understand and demonstrate how we can use the
measures of shared information content to decompose multivariate information.  We began by comparing
the union information content to the joint information content and used this comparison to define a
measure of synergistic information content.  We showed how one can use this measure, together with
the measures of shared information content, to decompose the joint information content.  Next, we
compared the algebraic structure of joint information to the lattice structure of shared
information, and showed how one can find the redundancy lattice from the partial information
decomposition \citep{williams2010} embedded within this larger algebraic structure.  More
specifically, since this paper considers information contents, this redundancy lattice is actually
same as the specificity lattice from pointwise partial information decomposition
\citep{finn2018probability, finn2018pointwise}.

This result links the work presented in this paper to the existing body of theoretical literature on
information decomposition \citep{williams2010, bertschinger2013, harder2013, harder2013phd,
  bertschinger2014, griffith2014a, griffith2014b, rauh2014, barrett2015, griffith2015, olbrich2015,
  perrone2016, rosas2016, faes2017, ince2017pid, james2017, kay2017, makkeh2017, pica2017, quax2017,
  rauh2017a, rauh2017b, rauh2017c, james2018}, and its applications~\citep{williams2011,
  flecker2011, lizier2013, stramaglia2014, timme2014, wibral2015, biswas2016, frey2016, timme2016,
  ghazi2017, maity2017, sootla2017, tax2017, wibral2017a, wibral2017b, finn2018quantifying,
  rosas2018, wollstadt2018idtxl, biswas2019, james2019, kolchinsky2019, li2019, rosas2019}.  (For a
brief summary of this literature, see \citep{lizier2018}.)  In contrast to the pointwise partial
information decomposition \citep{finn2018probability, finn2018pointwise}, most of these approaches
aim to decompose the average mutual information rather than the information content.  One exception
to this statement is due to \citet{ince2017ped}, who proposes a method of information decomposition
based upon the entropy.  Moreover, Ince discusses how this decomposition can be applied to the
information content (or in Ince's terminology, the local entropy).  Of particular relevance to this
paper, Ince obtains a result that is equivalent to \eq{mi_diff} whereby the mutual information
content is equal to the redundant information content minus the synergistic information content
\citep[eq.5]{ince2017ped}.  However, Ince's definition of redundant information content differs from
that of the intersection information content in \eq{mi_diff}.  To be specific, it is based upon the
sign of the multivariate mutual information content (or pointwise co-information), which is
interpreted as a measure of ``the set-theoretic overlap'' of multiple information contents (or local
entropies)~\citep[p.7]{ince2017ped}.  However, as discussed in \secRef{intro}, this set-theoretic
interpretation of the multivariate mutual information (co-information) is problematic.  To account
for these difficulties, Ince disregards the negative values, defining the redundant information
content to equal to the multivariate mutual information when it is positive, and to be zero
otherwise.

% A More detailed discussion relating the information content based decomposition to these other
% measures is provided in \citep{finn2018pointwise}.

% Several of these approaches also use a straightforward minimum operator over the sources
% \citep{Williams2010,Barret2015} for the redundant information (which plays a similar role to the
% intersection information content here).\citet{Williams2010} take a minimum over partial mutual
% information terms; \citet{Barret2015} examines minimum over average mutual information terms,
% whilst our work here and in \citet{finn2018pointwise} considers the minimum over information
% contents. It is noteworthy that these min-based approaches work for an arbitrary number of
% sources.

There are a number of avenues of inquiry for which this research will yield new insights,
particularly in complex systems, neuroscience, and communications theory.  For instance, these
measures might be used to better understand and quantify distributed intrinsic computation
\citep{lizier2013, finn2018quantifying}.  It is well known that that dynamics of individual regions
in the brain depend synergistically on multiple other regions; synergistic information content might
provide a means to quantify such dependencies in neural data~\citep{lizier2011multivariate,
  vakorin2009confounding, novelli2019large, wibral2015, wibral2017a}.  Furthermore, these measures
might be helpful for quantifying the synergistic encodings used in network coding \citep{yeung2008}.
Finally, it is well-known that many biological traits are not dependent on any one gene, but rather
are synergistically dependent on two or more genes, and the decomposed information provides a means
to quantify the unique, redundant and synergistic dependencies between a trait and a set of
genes~\citep{deutscher2008, anastassiou2007, white2011}.

\bibliography{intersection_entropy} 

\subsection*{Acknowledgements}
The authors would like to thank Nathan Harding, Richard Spinney, Daniel Polani, Leonardo Novelli and
Oliver Cliff for helpful discussions relating to this manuscript.

\appendix
\setcounter{secnumdepth}{0}
\section{Appendix}

We can use the closed-form \eq{closed_form} to evaluate the partial information contents for the
$n=3$ redundancy lattice of information contents.  Starting from the bottom and working up through
the redundancy lattice, we have the following partial information contents:
\begin{align}
  h_\partial(x \sqcap y \sqcap z)
    &= h(x \sqcap y \sqcap z); \displaybreak[0]\\[10pt]
  h_\partial(x \sqcap y)
    &= h(x \sqcap y) - h(x \sqcap y \sqcap z) \nn\\
    &= h\big((x \sqcap y) \ssm z\big); \displaybreak[0]\\[10pt]
  h_\partial\big(x \sqcap (y,z)\big)
    &= h\big(x \sqcap (y,z)\big) - h\big((x \sqcap y) \sqcup (x \sqcap z)\big)\nn\\
    &= h\big(x \sqcap (y,z)\big) - h\big(x \sqcap (y \sqcup z)\big) \nn\\
    &= h\big(x \sqcap (y \oplus z)\big); \displaybreak[0]\\[10pt]
  h_\partial(x)
    &= h(x) - h\big(x \sqcap (y,z)\big)\nn\\
    &= h\big(x \ssm (y,z)\big); \displaybreak[0]\\[10pt]
  h_\partial\big((x,y)\sqcap(x,z)\sqcap(y,z)\big)
    &= h\big((x,y)\sqcap(x,z)\sqcap(y,z)\big)
      - h\big((x\sqcap(y,z))\sqcup(y\sqcap(x,z))\sqcup(z\sqcap(x,y))\big) \nn\\
    &= h\big((x,y)\sqcap(x,z)\sqcap(y,z)\big) - h\big(x \sqcap (y \oplus z)\big)
      - h\big(y \sqcap (x \oplus z)\big) - h\big(z \sqcap (x \oplus y)\big) \nn\\
    &\qquad - h\big((x \sqcap y) \ssm z \big) - h\big((x \sqcap z) \ssm y \big)
      - h\big((y \sqcap z) \ssm x \big) - h(x \sqcap y \sqcap z) \nn\\
    &= h\big((x,y)\sqcap(x,z)\sqcap(y,z)\big) - h\big(x \sqcap (y, z)\big)
      - h\big(y \sqcap (x \oplus z)\big) - h\big(z \sqcap (x \oplus y)\big) \nn\\
    &\qquad - h\big((y \sqcap z) \ssm x \big) \tag{By~\lem{L9}.}  \nn\\
    &= h\big((x \oplus y) \sqcap (x \oplus z) \sqcap (y \oplus z)\big); \displaybreak[0]\\[10pt]
  h_\partial\big((x,y) \sqcap (x,z) \big)
    &= h\big((x,y) \sqcap (x,z)\big)
      - h(x \sqcup ((x,y) \sqcap (x,z) \sqcap(y,z))\big) \nn\\
    &= h\big((x,y) \sqcap (x,z)\big) -h(x) - h((x,y) \sqcap (x,z) \sqcap(y,z)\big) \nn\\
      &\qquad + h(x \sqcap ((x,y) \sqcap (x,z) \sqcap(y,z))\big) \nn\\
    &= h\big(((x,y)\sqcap(x,z)) \ssm (y,z)\big) - h\big(x \ssm (y,z)\big)
      \tag{By Lemma~\ref{lem:L1}.}\nn\\
    &= h\big( ((x,y) \ssm (y,z)) \sqcap ((x,z) \ssm (y,z)) \big) - h\big(x \ssm (y,z)\big)
      \tag{By Lemma~\ref{lem:L2}.}\nn\\
    &=h\big((x \oplus y) \sqcap (x \oplus z) \ssm (y,z)\big); \displaybreak[0]\\[10pt] %L2
  h_\partial(x,y)
    &= h(x,y) - h\big(((x,y)\sqcap(x,z)) \sqcup ((x,y)\sqcap(y,z))\big) \nn\\
    &= h(x,y) - h\big((x,y) \sqcap ((x,z) \sqcup (y,z))\big) \nn\\
    &= h\big((x,y) \ssm ((x,z) \sqcup (y,z))\big) \tag{By Lemma~\ref{lem:L3}.} \nn\\ % L3
    &= h\big((x \oplus y) \ssm ((x,z) \sqcup (y,z))\big); \displaybreak[0]\\[10pt]
  h_\partial(x,y,z)
    &= h(x,y,z) - h\big((x,y) \sqcup (x,z) \sqcup (y,z)\big) \nn\\
    &= h\big((x,y) \oplus (x,z) \oplus (y,z)\big).
\end{align} 

\begin{lemma} We have the following identity,
  \label{lem:L1}
  \begin{equation}
    h\big((x \sqcap y) \ssm z\big) = h\big((x \ssm z) \sqcap (y \ssm z)\big).
  \end{equation}
\end{lemma}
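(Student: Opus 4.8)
The plan is to reduce both sides to the common $\min$/$\max$ representation established in \secRef{marginal} and then to check the resulting elementary identity. Throughout I write $a=h(x)$, $b=h(y)$ and $c=h(z)$ for the information contents of a single realisation. First I would unfold the left-hand side. Recalling that the relative-complement notation abbreviates $h(\gamma \ssm \delta) = h(\gamma) - h(\gamma \sqcap \delta)$ (the same convention that, via \eq{max_min_two} and \eq{uni_ic_x}, gives $h(x \ssm y)=h(x)-h(x \sqcap y)$), and using associativity of $\sqcap$ (Property~\ref{prop: associativity}), the left-hand side becomes
\[
  h\big((x \sqcap y) \ssm z\big) = h(x \sqcap y) - h(x \sqcap y \sqcap z) = \min(a,b) - \min(a,b,c).
\]
Since $\min(a,b,c)=\min\big(\min(a,b),c\big)$ and $t-\min(t,c)=\max(t-c,0)$ for every $t$, taking $t=\min(a,b)$ shows this equals $\max\big(\min(a,b)-c,\,0\big)$.

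Next I would unfold the right-hand side. By \eq{uni_ic_x} we have $h(x \ssm z)=\max(a-c,0)$ and $h(y \ssm z)=\max(b-c,0)$, and the outer meet is the $\min$ operator, so
\[
  h\big((x \ssm z) \sqcap (y \ssm z)\big) = \min\big(\max(a-c,0),\,\max(b-c,0)\big).
\]
It then remains to verify the numerical identity $\min\big(\max(a-c,0),\max(b-c,0)\big)=\max\big(\min(a,b)-c,0\big)$. The cleanest route is to note that $f(t)=\max(t,0)$ is monotone non-decreasing and therefore commutes with $\min$, i.e.\ $\min\big(f(u),f(v)\big)=f\big(\min(u,v)\big)$; applying this with $u=a-c$ and $v=b-c$, and using the translation identity $\min(a-c,b-c)=\min(a,b)-c$, yields exactly $\max\big(\min(a,b)-c,0\big)$, matching the left-hand side. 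Equivalently, since the information contents of a single realisation are totally ordered (the total-order theorem of \secRef{generalised}, which rests on Property~\ref{prop:connex}), one may fix an ordering of $a,b,c$ and check the finitely many cases; by the symmetry of both sides under exchanging $x$ and $y$, only the relative position of $c$ among the three matters.

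I expect no deep obstacle here: the statement is essentially the assertion that truncated subtraction by $z$ commutes with the meet of $x$ and $y$, which is a manifestation of the distributivity already recorded in Property~\ref{prop: distributivity}. The only point demanding care is the correct unfolding of the symbol $\ssm$ on the two sides---on the left it acts on the already-formed meet $x \sqcap y$ and so produces the triple meet $x \sqcap y \sqcap z$, whereas on the right it acts on the individual sources $x$ and $y$ before the meet is taken. Confirming that these two orders of operation agree is precisely the content of the lemma.
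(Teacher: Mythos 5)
Your proposal is correct and takes essentially the same route as the paper: the paper's proof likewise unfolds both sides via \eq{uni_ic_x} and \eq{max_min_two} into $\max\big(\min(h(x),h(y))-h(z),\,0\big)$ and $\min\big(\max(h(x)-h(z),0),\,\max(h(y)-h(z),0)\big)$ and equates them using the fact that $\max\big(a,\min(b,c)\big)=\min\big(\max(a,b),\max(a,c)\big)$. Your justification of the key numerical step---that $\max(\cdot,0)$ is monotone and hence commutes with $\min$, combined with translation invariance of $\min$---is merely a repackaging of that same distributivity identity, so the two arguments coincide in substance.
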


\begin{proof}
  From \eq{uni_ic_x} and \eq{max_min_two}, we have that
  \begin{align}
    h\big((x \sqcap y) \ssm z\big) 
      &= \max\big(\min(h(x),h(y))-h(z), 0\big) = \max\big(\min(h(x)-h(z),\;h(y)-h(z)), 0\big) \nn\\
      &= \min\big(\max(h(x)-h(z),0),\;\max(h(y)-h(z),0)\big) = h\big((x \ssm z)\sqcap(y \ssm z)\big), 
  \end{align}
  where we have used the fact that $\max(a,\,\min(b,c))$ is equal to $\min(\max(a,b),\,\max(a,c))$.
\end{proof}

\begin{lemma}
\label{lem:L2}
  We have the following identity,
  \begin{equation}
    h\big(((x,y) \ssm (y,z)) \sqcap ((x,z) \ssm (y,z)) \big)
      = h\big(x \ssm (y,z)\big) + h\big((x \oplus y) \sqcap (x \oplus z) \ssm (y,z)\big).
  \end{equation}
\end{lemma}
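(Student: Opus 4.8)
The plan is to reduce the identity to an elementary statement about maxima and minima of the underlying information contents, in the same spirit as the proof of Lemma~\ref{lem:L1}. Throughout I abbreviate $a=h(x)$, $P=h(x,y)$, $Q=h(x,z)$ and $R=h(y,z)$, and I will use repeatedly that the join semi-lattice of joint information contents is monotone, so that $P\geq\max(a,h(y))$, $Q\geq\max(a,h(z))$ and, crucially, $R\geq\max(h(y),h(z))$.

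First I would simplify the left-hand side. Since the proof of Lemma~\ref{lem:L1} only uses the $\max$/$\min$ structure of arbitrary non-negative information contents, it applies verbatim with the substitution $x\mapsto(x,y)$, $y\mapsto(x,z)$, $z\mapsto(y,z)$, which turns $h\big(((x,y)\ssm(y,z))\sqcap((x,z)\ssm(y,z))\big)$ into $h\big(((x,y)\sqcap(x,z))\ssm(y,z)\big)$. Using $h(u\ssm v)=\max(h(u)-h(v),0)$ together with $h\big((x,y)\sqcap(x,z)\big)=\min(P,Q)$, the left-hand side becomes $\max(\min(P,Q)-R,0)$. Subtracting the first term on the right, $h(x\ssm(y,z))=\max(a-R,0)$, and noting $a\leq\min(P,Q)$, the identity $\max(s-R,0)-\max(a-R,0)=\max(s-\max(a,R),0)$ (valid whenever $s\geq a$) reduces the whole claim to
\[
 h\big((x\oplus y)\sqcap(x\oplus z)\ssm(y,z)\big)=\max\big(\min(P,Q)-\max(a,R),\,0\big).
\]

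Next I would expand the synergistic term. Following the convention that a synergy $p\oplus q$ stands for the formal difference between its joint $(p,q)$ and its union $p\sqcup q$ — the rule the paper already uses to define $h\big(x\sqcap(y\oplus z)\big)$ — and distributing this difference through the two synergy factors $(x\oplus y)$ and $(x\oplus z)$ bilinearly, the term becomes a signed sum of four quantities of the form $\max(\min(\cdot,\cdot)-R,0)$, with arguments $((x,y),(x,z))$, $(x\sqcup y,(x,z))$, $((x,y),x\sqcup z)$ and $(x\sqcup y,x\sqcup z)$. Writing $\beta=\max(a,h(y))$ and $\gamma=\max(a,h(z))$ and cancelling the common term $\max(\min(P,Q)-R,0)$, what remains to prove is the elementary identity
\[
 \max(\min(\beta,Q)-R,0)+\max(\min(P,\gamma)-R,0)-\max(\min(\beta,\gamma)-R,0)=\max(a-R,0).
\]

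Finally I would establish this by a case analysis on whether $\beta\leq\gamma$ or $\beta>\gamma$, where the monotonicity constraints do the real work: since $P\geq\beta$ and $Q\geq\gamma$, two of the three minima on the left collapse onto $\min(\beta,\gamma)$, and since $R\geq\max(h(y),h(z))$ (together with $\beta,\gamma\geq a$) one checks in each case that the surviving term is forced below $R$ or else equals $a$, so that the left-hand side is exactly $\max(a-R,0)$. I expect this case analysis to be the only genuine obstacle; it is routine but depends essentially on the semi-lattice monotonicity — in particular on $R\geq\max(h(y),h(z))$ — without which the identity fails. The other delicate point is expanding the composite synergistic term correctly, since that expansion is a definitional convention rather than a consequence of any distributive law for $\min$ and $\max$.
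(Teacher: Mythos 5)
Your proposal is correct, but it takes a genuinely different route from the paper's proof. The paper stays inside the lattice calculus: it expands $h\big((x,y)\ssm(y,z)\big)$ via the bivariate decomposition \eq{bivar_decomp}, merging the $(x\sqcap y)$ and $(x\ssm y)$ pieces into $h\big(x\ssm(y,z)\big)$ and killing $h\big((y\ssm x)\ssm(y,z)\big)=\max\big(h(y\ssm x)-h(y,z),0\big)=0$; it does the same for $h\big((x,z)\ssm(y,z)\big)$, then pulls the common summand out of the $\min$ in \eq{max_min_two}, leaving the residual $h\big(((x\oplus y)\ssm(y,z))\sqcap((x\oplus z)\ssm(y,z))\big)$, which it silently identifies with the statement's composite term. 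You instead reduce everything to scalar $\max$/$\min$ arithmetic: \lem{L1} applied with the substitution $x\mapsto(x,y)$, etc.\ (legitimate, since that proof uses only the $\max$/$\min$ structure, and a move the appendix itself makes), the identity $\max(s-R,0)-\max(a-R,0)=\max(s-\max(a,R),0)$ for $s\geq a$, the inclusion--exclusion expansion of the composite synergy term, and a final constrained identity. The definitional crux you flag is real and is shared by both proofs: the paper never defines $h\big((x\oplus y)\sqcap(x\oplus z)\ssm(y,z)\big)$ independently, its proof in effect defining it as the residual $\min$, just as yours posits the bilinear joint-minus-union expansion; reassuringly, under the interval semantics of the total order (where $x\oplus y$ corresponds to $[h(x\sqcup y),\,h(x,y)]$) both conventions evaluate to $\max\big(\min(h(x,y),h(x,z))-\max(h(x),h(y,z)),\,0\big)$, using $h(y,z)\geq\max\big(h(y),h(z)\big)$. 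What each buys: the paper's route reuses \eq{bivar_decomp} and matches the rest of the appendix notationally, but implicitly relies on the set semantics (its nested $\ssm$ terms are not literal plug-in formulas); yours makes the semantics explicit and isolates exactly which monotonicity constraints do the work --- indeed your closing identity holds term by term, since each of the three quantities $\max\big(\min(\cdot,\cdot)-R,0\big)$ individually equals $\max(a-R,0)$ (e.g.\ $\min(\beta,Q)=\max\big(a,\min(h(y),Q)\big)$ with $\min(h(y),Q)\leq R$), so your case analysis is even lighter than you anticipate.
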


\begin{proof}
  From \eq{bivar_decomp}, we have that
  \begin{align}
    h\big((x,y) \ssm (y,z)\big)
      &= h\big((x \sqcap y) \ssm (y,z)\big) + h\big((x \ssm y) \ssm (y,z)\big) + h\big((y \ssm x) \ssm (y,z)\big) + h\big((x \oplus y) \ssm (y,z)\big) \nn\\
      &= h\big(x \ssm (y,z)\big) + h\big((y \ssm x) \ssm (y,z)\big)
        + h\big((x \oplus y) \ssm (y,z)\big),
  \end{align}
  and since
  \begin{equation}
    h\big((y \ssm x) \ssm (y,z)\big) = \max\big(h(y \ssm x)-h(y,z),0\big) = 0,
  \end{equation}
  we get that
  \begin{equation}
    h\big((x,y) \ssm (y,z)\big) = h\big(x \ssm (y,z)\big) + h\big((x \oplus y) \ssm (y,z)\big). 
  \end{equation}
  Finally, by inserting this result into \eq{max_min_two}, we get that
  \begin{multline}
    h\big(((x,y) \ssm (y,z)) \sqcap ((x,z) \ssm (y,z)) \\
    \begin{aligned}
      &= \min\big(h\big(x \ssm (y,z)\big) + h\big((x \oplus y) \ssm (y,z)\big), \,
        h\big(x \ssm (y,z)\big) + h\big((x \oplus z) \ssm (y,z)\big)\big) \nn\\
      &= h\big(x \ssm (y,z)\big)
        + h\big(((x \oplus y) \ssm (y,z)) \sqcap ((x \oplus z) \ssm (y,z))\big). 
    \end{aligned}
  \end{multline}
\end{proof}

\begin{lemma}
  \label{lem:L3}
  We have the following identity, 
  \begin{equation}
    h\big(((x,y) \ssm ((x,z) \sqcup (y,z)) \big) = h\big(((x \oplus y) \ssm ((x,z) \sqcup (y,z)) \big).
  \end{equation}
\end{lemma}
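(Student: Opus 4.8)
The plan is to follow the same strategy as the proof of \lem{L2}: expand the joint information content through the bivariate decomposition \eq{bivar_decomp}, distribute the relative-complement operation across the resulting atoms, and show that every atom except the synergistic one is annihilated. Abbreviating $w = (x,z) \sqcup (y,z)$, the decomposition \eq{bivar_decomp} reads $h(x,y) = h(x \sqcap y) + h(x \ssm y) + h(y \ssm x) + h(x \oplus y)$, and distributing $\ssm w$ over this sum (exactly as is done for $\ssm(y,z)$ in the proof of \lem{L2}) gives
\begin{equation}
  h\big((x,y) \ssm w\big) = h\big((x \sqcap y) \ssm w\big) + h\big((x \ssm y) \ssm w\big) + h\big((y \ssm x) \ssm w\big) + h\big((x \oplus y) \ssm w\big). \nonumber
\end{equation}
The goal is then to argue that the first three terms on the right vanish.

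The engine driving the vanishing is a single inequality: $w$ dominates both marginal contents. By \eq{shannon_inequalities_pw} we have $h(x,z) \geq h(x)$ and $h(y,z) \geq h(y)$, hence $h(w) = \max\big(h(x,z), h(y,z)\big) \geq \max\big(h(x), h(y)\big) = h(x \sqcup y)$. First I would use this together with \eq{max_min_two} and \eq{uni_ic_x} to check that $h(x \sqcap y)$, $h(x \ssm y)$ and $h(y \ssm x)$ are each bounded above by $h(x)$ or $h(y)$, and therefore by $h(w)$, so that subtracting $w$ removes each of them entirely. Since these three atoms together account for precisely the union part $h(x \sqcap y) + h(x \ssm y) + h(y \ssm x) = h(x \sqcup y)$, and $h(w) \geq h(x \sqcup y)$, the subtraction exhausts the whole union part and leaves only the synergistic atom, yielding $h\big((x,y) \ssm w\big) = h\big((x \oplus y) \ssm w\big)$.

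The hard part will be making the distribution of $\ssm w$ over \eq{bivar_decomp} rigorous rather than merely formal, and in particular pinning down the meaning of $\ssm w$ applied to the compound synergistic term $x \oplus y$. The subtlety---already present in \lem{L2}---is that $\ssm w$ must be read as removing a prefix of total size $h(w)$ from the nested stack of atoms comprising $h(x,y)$, so that the surviving synergy is measured from its own base $h(x \sqcup y)$ rather than from zero; concretely $h\big((x \oplus y) \ssm w\big) = \max\big(h(x,y) - h(w), 0\big)$ precisely because $h(w) \geq h(x \sqcup y)$ guarantees the prefix clears the union part before biting into the synergy $h(x \oplus y) = h(x,y) - h(x \sqcup y)$ from \eq{synergistic_ic}. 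Once this additive, prefix-removal reading is justified, the vanishing of the three lower atoms is immediate from \eq{shannon_inequalities_pw} and the identity follows.
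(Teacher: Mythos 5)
Your proposal takes essentially the same route as the paper's own proof: the paper likewise expands $h\big((x,y) \ssm ((x,z) \sqcup (y,z))\big)$ via \eq{bivar_decomp} and then kills the three non-synergistic terms using \eq{uni_ic_x} together with the fact that $h(x \sqcap y)$, $h(x \ssm y)$ and $h(y \ssm x)$ are all bounded above by $h\big((x,z) \sqcup (y,z)\big)$, which is exactly your domination argument $h(w) \geq h(x \sqcup y)$. Your closing discussion of the prefix-removal reading of $\ssm$ on the compound term $x \oplus y$ merely makes explicit a convention the paper leaves implicit (as it also does in \lem{L2}), so the argument is the same.
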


\begin{proof}
  By \eq{bivar_decomp}, we have that
  \begin{align}
    h\big((x,y) \ssm ((x.z) \sqcup (y,z))\big)
      &= h\big((x \sqcap y) \ssm ((x,z) \sqcup (y,z))\big)
        + h\big((x \ssm y) \ssm ((x,z) \sqcup (y,z))\big) \nn\\
      &\quad + h\big((y \ssm x) \ssm ((x,z) \sqcup (y,z))\big)
        + h\big((x \oplus y) \ssm ((x,z) \sqcup (y,z))\big). 
  \end{align}
  Since $h(x \sqcap y)$, $h(x \ssm y)$ and $h(y \ssm x)$ are less than $h\big(((x,z) \sqcup (y,z)\big)$,
  from \eq{uni_ic_x} we have that $h\big((x \sqcap y) \ssm ((x,z) \sqcup (y,z))\big)$,
  $h\big((x \ssm y) \ssm ((x,z) \sqcup (y,z))\big)$ and
  $h\big((y \ssm x) \ssm ((x,z) \sqcup (y,z))\big)$ are all equal to $0$.
\end{proof}

\begin{lemma}
  \label{lem:L4}
  We have the following identity, 
  \begin{equation}
    h\big(x \sqcap (y \ssm x)\big) = 0.
  \end{equation}
\end{lemma}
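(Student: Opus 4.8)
The plan is to read $y \ssm x$ as a genuine element of the lattice --- the one that the fundamental theorem of distributive lattices (exploited in \secRef{properties}) identifies with the set $B \setminus A$ --- and \emph{not} as the bare number $\max\big(h(y)-h(x),0\big)$ to be dropped into a $\min$. This distinction is the whole point of the lemma: the tempting formula $h\big(x \sqcap (y \ssm x)\big) = \min\big(h(x),h(y \ssm x)\big)$ is invalid, because $y \ssm x$ is the part of $B$ lying \emph{outside} $A$, which is disjoint from $A$ rather than nested at the origin the way a raw $\min$ would assume. Once $y \ssm x$ is interpreted as $B \setminus A$, the statement is just the measure-theoretic triviality $A \cap (B \setminus A) = \emptyset$.

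Concretely, I would first apply the two-element inclusion--exclusion identity, exactly in the form of \eq{max_min_two}, to write
\begin{equation}
  h\big(x \sqcap (y \ssm x)\big) = h(x) + h(y \ssm x) - h\big(x \sqcup (y \ssm x)\big).
\end{equation}
The key step is then to establish the information-content analogue of $A \cup (B \setminus A) = A \cup B$, namely $h\big(x \sqcup (y \ssm x)\big) = h(x \sqcup y)$. Granting this, I substitute the definition $h(y \ssm x) = h(x \sqcup y) - h(x)$ from \eq{uni_ic_y}, and the three terms cancel telescopically to give $0$, consistent with the non-negativity in \eq{union_inequality}.

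To prove the key step I would split on connexity, Property~\ref{prop:connex}, which totally orders $h(x)$ and $h(y)$ for a single realisation. When $h(x) \geq h(y)$ one has $h(y \ssm x) = 0$, so $y \ssm x$ is the bottom element and $x \sqcup (y \ssm x) = x = x \sqcup y$; in this case monotonicity of the meet already forces $h\big(x \sqcap (y \ssm x)\big) \le h(y \ssm x) = 0$ outright. When $h(y) > h(x)$ the element $x$ lies below $y$, so adjoining $y \ssm x$ to $x$ recovers all of $y$, giving $h\big(x \sqcup (y \ssm x)\big) = h(y) = h(x \sqcup y)$, and the displayed cancellation again yields $0$.

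The main obstacle I anticipate is conceptual rather than computational: resisting the seductive but wrong pointwise substitution of $h(y \ssm x)$ into a $\min$, and instead justifying that a relative complement must be handled through the set isomorphism of \secRef{properties}, where $A$ and $B \setminus A$ are visibly disjoint. Every arithmetic step after that commitment is a one-line cancellation.
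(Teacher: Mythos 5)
Your argument is correct, and it lands on the same telescoping cancellation as the paper, but via a genuinely different mechanism. The paper's proof is a two-line algebraic collapse that never invokes connexity: it writes $h\big(x \sqcap (y \ssm x)\big) = h\big(x \sqcap (x,y)\big) - h(x \sqcap x)$ (an implicit additivity of the meet over the disjoint decomposition of the joint, which it glosses) and then reduces both terms to $h(x)$ using the absorption identity \eq{capWithJoint} and idempotence \eq{intersectionIdempotent}, with no case analysis. You instead use inclusion--exclusion, $h\big(x \sqcap (y \ssm x)\big) = h(x) + h(y \ssm x) - h\big(x \sqcup (y \ssm x)\big)$, together with the key identity $h\big(x \sqcup (y \ssm x)\big) = h(x \sqcup y)$, which you establish by splitting on Property~\ref{prop:connex}. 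Both arguments live at the same level of rigour---each ultimately leans on the set representation of \secRef{properties}, under which $y \ssm x$ corresponds to $B \setminus A$---and your opening guardrail, that $h\big(x \sqcap (y \ssm x)\big) \neq \min\big(h(x),\,h(y \ssm x)\big)$ for compound arguments, is exactly the right one (the paper makes the analogous caveat for $h\big(x \sqcap (y \oplus z)\big)$ in \secRef{mid}). What each route buys: the paper's proof uses only identities that survive taking expectations, so it applies verbatim at the entropy level, whereas your connexity step is strictly pointwise (the total order varies across realisations, so Property~\ref{prop:connex} fails for entropies, and the entropy version of the lemma must then be recovered by averaging your pointwise identity); conversely, your route makes the geometry explicit---in the nested-set picture of \fig{total_order} your two cases are literally $B \subseteq A$ and $A \subset B$---and your monotonicity shortcut $h\big(x \sqcap (y \ssm x)\big) \leq h(y \ssm x) = 0$ in the first case is more elementary than either telescoping. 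One small caution: when citing \eq{max_min_two} for the first step, note that you are entitled only to its additive (inclusion--exclusion) form, i.e.\ the measure identity $\mu(U \cap V) = \mu(U) + \mu(V) - \mu(U \cup V)$; the accompanying $\min$ evaluation in \eq{max_min_two} is derived for marginal realisations and, as you yourself stress, must not be applied to $y \ssm x$.
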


\begin{proof}
  We have that
  \begin{align}
    h\big(x \sqcap (y \ssm x)\big) = h\big(x \sqcap (x,y)\big) - h\big(x \sqcap x\big)
    = h(x \sqcap x) - h(x) = 0, 
  \end{align}
  where we have used \eq{intersectionIdempotent} and \eq{capWithJoint}.
\end{proof}

\begin{lemma}
  \label{lem:L5}
  We have the following identity, 
  \begin{equation}
    h\big((y \ssm x) \sqcap (y, z)\big) = h\big(y \ssm x \big).
  \end{equation}
\end{lemma}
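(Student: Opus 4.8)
The plan is to reduce the identity to the underlying real-valued surprisals and to exploit the fact that $\sqcap$ is simply the $\min$ operator. By the generalised intersection identity \eq{max_min_two}, for any two elements $a$ and $b$ we have $h(a \sqcap b) = \min\!\big(h(a), h(b)\big)$, so in particular
\begin{equation}
  h\big((y \ssm x) \sqcap (y,z)\big) = \min\big(h(y \ssm x),\, h(y,z)\big).
\end{equation}
It therefore suffices to establish the single inequality $h(y \ssm x) \leq h(y,z)$; once this holds the $\min$ collapses onto $h(y \ssm x)$ and the claim follows immediately.

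To get that inequality I would chain two bounds. First, I would bound the unique information content by the marginal it is extracted from: from \eq{uni_ic_y} we have $h(y \ssm x) = \max\!\big(0, h(y) - h(x)\big)$, and since $h(x) \geq 0$ and $h(y) \geq 0$ this is at most $h(y)$, so $h(y \ssm x) \leq h(y)$. Second, I would invoke the join semi-lattice structure of the joint information content: the analogue of \eq{shannon_inequalities_pw} for the pair $(y,z)$ gives $h(y,z) \geq h(y)$. Chaining, $h(y \ssm x) \leq h(y) \leq h(y,z)$, which is exactly what is required.

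I do not expect a genuine obstacle here; the only point needing a moment's care is the first step, namely confirming that applying $\sqcap$ to the compound relative complement $y \ssm x$ and the joint $(y,z)$ still evaluates as the $\min$ of their surprisals. This is legitimate because in the totally ordered structure every quantity is identified with a single non-negative real number, exactly as in the explicit computations underlying Lemmas~\ref{lem:L1} and~\ref{lem:L2}. Equivalently, the whole argument can be phrased order-theoretically: $h(y \ssm x) \leq h(y)$ reads $y \ssm x \preceq y$, while the absorption instance $h\big(y \sqcap (y,z)\big) = h(y)$ of \eq{capWithJoint} reads $y \preceq (y,z)$, so by transitivity $(y \ssm x) \sqcap (y,z)$ is just the smaller element $y \ssm x$.
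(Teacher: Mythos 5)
Your conclusion is correct, but the step on which your whole argument rests is not valid in this paper's calculus, and it is exactly the step where the content of the lemma lies. You assert that for \emph{any} two elements $a$ and $b$ one has $h(a \sqcap b) = \min\big(h(a), h(b)\big)$. Equation \eq{max_min_two} licenses this only when both operands are realisations (or, per \secRef{mid}, sources, i.e.\ joint realisations); it fails for derived components. The paper says so explicitly for $h\big(x \sqcap (y \oplus z)\big)$, which is defined as the difference $h\big(x\sqcap(y,z)\big) - h\big(x\sqcap(y \sqcup z)\big)$ and, as noted there, does not equal $\min\big(h(x),\min(h(z|y),h(y|z))\big) = \min\big(h(x), h(y \oplus z)\big)$. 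Worse, \lem{L4} of the same appendix refutes your rule with precisely the kind of operand you are using: $h\big(x \sqcap (y \ssm x)\big) = 0$, whereas $\min\big(h(x), h(y \ssm x)\big) = \min\big(h(x), h(y)-h(x)\big) > 0$ whenever $h(y) > h(x) > 0$. So your first displayed equation is unsupported, and the (correct) numerical chain $h(y \ssm x) \leq h(y) \leq h(y,z)$ cannot close the gap, because a numerical inequality between surprisals does not entail the corresponding lattice relation --- that conflation is exactly why the $\min$ rule fails in the two cases just cited. Your closing ``order-theoretic'' rephrasing commits the same fallacy when it reads $h(y \ssm x) \leq h(y)$ as $y \ssm x \preceq y$.

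What is actually needed --- and what the paper does --- is a containment argument at the level of components rather than of numerical values: rewrite the joint as $(y,z) = (y \ssm x,\, y \sqcap x,\, z)$, splitting $y$'s information into its part unique relative to $x$ and its part shared with $x$, so that $y \ssm x$ appears as an explicit component of the joint; then a single application of the absorption identity \eq{capWithJoint}, with $y \ssm x$ in the role of the absorbed element, gives
\begin{equation}
  h\big((y \ssm x) \sqcap (y \ssm x,\, y \sqcap x,\, z)\big) = h(y \ssm x).
\end{equation}
Your $\min$ evaluation does return the correct value in this instance, but only because this containment holds; establishing it is the entire content of the lemma, so it must be derived via the decomposition-plus-absorption route (the inclusion $y \ssm x \preceq y$ is itself an absorption instance, $h\big((y\ssm x)\sqcap(y\ssm x,\, y\sqcap x)\big) = h(y\ssm x)$, not a consequence of $h(y\ssm x)\leq h(y)$). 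With that repair your argument collapses into the paper's two-line proof.
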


\begin{proof}
  We have that
  \begin{align}
    h\big((y \ssm x) \sqcap (y, z)\big)
      &= h\big((y \ssm x) \sqcap (y \ssm x, y \sqcap x, z)\big). 
  \end{align}
  Using \eq{capWithJoint}, this then reduces via to
  \begin{align}
    h\big((y \ssm x) \sqcap (y, z)\big) = h\big(y \ssm x \big). 
  \end{align}
\end{proof}

\begin{lemma}
  \label{lem:L6}
  We have the following identity, 
  \begin{equation}
    h\big(x \sqcap (x \oplus z)\big) = 0. 
  \end{equation}
\end{lemma}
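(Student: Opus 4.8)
The plan is to mirror the argument used for Lemma~\ref{lem:L4}, exploiting the fact that the synergistic information content is, by its definition \eq{synergistic_ic}, a formal difference between a joint term and a union term over which the intersection operator distributes linearly. First I would unfold the synergy as $h(x \oplus z) = h(x,z) - h(x \sqcup z)$. Then, using exactly the same distributive convention by which the mixed term $h\big(x \sqcap (y \oplus z)\big)$ was defined to equal $h\big(x \sqcap (y,z)\big) - h\big(x \sqcap (y \sqcup z)\big)$ earlier in \secRef{mid}, I would distribute the intersection with $x$ across this difference to obtain
\[
  h\big(x \sqcap (x \oplus z)\big) = h\big(x \sqcap (x,z)\big) - h\big(x \sqcap (x \sqcup z)\big).
\]
The only change from the stated definition is that the outer variable now coincides with one of the variables inside the synergy, but the unfolding is a purely syntactic rule and is unaffected by this coincidence.

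The second step is to collapse each of the two remaining terms to $h(x)$. The first term equals $h(x)$ because the intersection information content absorbs the joint information content: \eq{capWithJoint} with $y$ replaced by $z$ gives $h\big(x \sqcap (x,z)\big) = h(x)$. The second term equals $h(x)$ by the absorption identity connecting the union and intersection operators, Property~\ref{prop: absorption}, again with $y$ replaced by $z$, so that $h\big(x \sqcap (x \sqcup z)\big) = h(x)$. Subtracting then yields $h\big(x \sqcap (x \oplus z)\big) = h(x) - h(x) = 0$, as required.

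I do not anticipate a genuine obstacle here; the single point requiring care is the same one flagged when $h\big(x \sqcap (y \oplus z)\big)$ was introduced, namely that this expression must be read as the formal lattice-theoretic difference and \emph{not} as the naive $\min\big(h(x), h(x \oplus z)\big)$, which does not in general vanish. Once the distributive unfolding is justified by appeal to the established definition of the mixed synergistic--intersection term, the two absorption facts finish the argument immediately, exactly paralleling the structure of Lemma~\ref{lem:L4}.
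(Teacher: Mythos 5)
Your proposal is correct, and it follows the same core strategy as the paper's proof: unfold the synergy as the formal difference $h(x \oplus z) = h(x,z) - h(x \sqcup z)$, distribute the intersection with $x$ across it, and collapse the resulting terms. The only divergence is in how the union term is dispatched. You evaluate $h\big(x \sqcap (x \sqcup z)\big) = h(x)$ in a single step via the absorption identity of Property~\ref{prop: absorption}, whereas the paper instead expands the union term further, effectively writing $h(x \sqcup z) = h(x) + h(z \ssm x)$, and then needs both idempotence \eq{intersectionIdempotent} to get $h(x \sqcap x) = h(x)$ and \lem{L4} to kill the residual term $h\big(x \sqcap (z \ssm x)\big) = 0$. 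Your route is marginally more economical: it never invokes \lem{L4}, and it sidesteps a sign slip in the paper's displayed first line (the term $h\big(x \sqcap (z \ssm x)\big)$ appears there with a $+$ where the expansion dictates a $-$, harmless only because that term vanishes). What the paper's longer route buys is a small exercise of \lem{L4}, which it has just proved and which is needed again in \lem{L8} and \lem{L9}; your version makes the lemma depend only on \eq{capWithJoint} and Property~\ref{prop: absorption}. Your closing caution is also well placed: the identity genuinely fails if one misreads $h\big(x \sqcap (x \oplus z)\big)$ as $\min\big(h(x), h(x \oplus z)\big)$, so the formal distributive reading, as established when $h\big(x \sqcap (y \oplus z)\big)$ was introduced in \secRef{mid}, is essential to both arguments.
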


\begin{proof}
  We have
  \begin{align}
    h\big(x \sqcap (x \oplus z)\big)
      &= h\big(x \sqcap (x, z)\big) - h\big(x \sqcap x\big) + h\big(x \sqcap (z \ssm x)\big) \nn, 
  \end{align}
  which then reduces via \eq{capWithJoint}, \eq{intersectionIdempotent} and \lem{L4} to:
  \begin{align}
    h\big(x \sqcap (x \oplus z)\big)
      &= h\big(x\big) - h\big(x\big) + h\big(x \sqcap (z \ssm x)\big) + 0\\
      &=0. \nn \qedhere
  \end{align}
\end{proof}

\begin{lemma}
  \label{lem:L7}
  We have the following identity, 
  \begin{equation}
    h\big((y \ssm x) \sqcap (x \oplus z)\big) = h\big(y \sqcap (x \oplus z) \big).
  \end{equation}
\end{lemma}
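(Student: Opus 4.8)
The plan is to reduce everything to the single fact, already established in \lem{L6}, that the synergistic content $x \oplus z$ is disjoint from $x$, i.e.\ $h\big(x \sqcap (x \oplus z)\big) = 0$. The only difference between $y$ and $y \ssm x$ is the portion of $y$ that lies ``inside'' $x$, namely $x \sqcap y$; intersecting that portion with $x \oplus z$ ought to contribute nothing, and proving exactly this is the whole of the argument.

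First I would split $y$ into the two disjoint pieces $x \sqcap y$ and $y \ssm x$. From the definitions \eq{max_min_two} and \eq{uni_ic_y} one has $h(x \sqcap y) + h(y \ssm x) = \min\big(h(x),h(y)\big) + \max\big(0,h(y)-h(x)\big) = h(y)$, so $y$ decomposes into these two components. Applying $\sqcap (x \oplus z)$ termwise to this decomposition---precisely the manoeuvre used in the proofs of Lemmas~\ref{lem:L2} and \ref{lem:L3}---gives
\[
  h\big(y \sqcap (x \oplus z)\big) = h\big((x \sqcap y) \sqcap (x \oplus z)\big) + h\big((y \ssm x) \sqcap (x \oplus z)\big),
\]
so it then remains only to show that the first term on the right vanishes.

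To kill that term I would repeat the same decomposition one level down. Since $h(x \sqcap y) + h(x \ssm y) = h(x)$ by \eq{max_min_two} and \eq{uni_ic_x}, the element $x$ splits into $x \sqcap y$ and $x \ssm y$; applying $\sqcap (x \oplus z)$ termwise yields
\[
  h\big(x \sqcap (x \oplus z)\big) = h\big((x \sqcap y) \sqcap (x \oplus z)\big) + h\big((x \ssm y) \sqcap (x \oplus z)\big).
\]
The left-hand side is $0$ by \lem{L6}, and both summands on the right are non-negative, so each must vanish; in particular $h\big((x \sqcap y) \sqcap (x \oplus z)\big) = 0$. Substituting this back into the previous display gives $h\big(y \sqcap (x \oplus z)\big) = h\big((y \ssm x) \sqcap (x \oplus z)\big)$, which is the claim. (Equivalently, one could shortcut the last step by monotonicity, noting $h\big((x \sqcap y) \sqcap (x \oplus z)\big) \le h\big(x \sqcap (x \oplus z)\big) = 0$.)

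The one delicate point, and the place where care is needed, is justifying the termwise application of $\sqcap (x \oplus z)$ to a disjoint decomposition: this is the additivity of the intersection content over disjoint sub-regions of its \emph{first} argument with the second argument held fixed, which is the set-/measure-additivity underlying the Birkhoff correspondence and is used freely elsewhere in the appendix. Crucially, I would never invoke associativity of $\sqcap$ with a synergy term $x \oplus z$ treated as a single lattice element---such manipulations are explicitly flagged as invalid in \secRef{mid}---so the argument stays entirely at the level of intersecting one region at a time against the fixed region $x \oplus z$.
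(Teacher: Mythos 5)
Your proposal is correct and takes essentially the same route as the paper: both split $h\big(y \sqcap (x \oplus z)\big)$ into $h\big((y \sqcap x) \sqcap (x \oplus z)\big) + h\big((y \ssm x) \sqcap (x \oplus z)\big)$ and then annihilate the first term using \lem{L6}. Your parenthetical monotonicity shortcut, $h\big((x \sqcap y) \sqcap (x \oplus z)\big) \le h\big(x \sqcap (x \oplus z)\big) = 0$, is precisely the step the paper leaves implicit, so your auxiliary decomposition of $x$ into $x \sqcap y$ and $x \ssm y$ is unnecessary but harmless.
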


\begin{proof}
  We have that
 \begin{align}
    h\big(y \sqcap (x \oplus z) \big)
      &= h\big((y \ssm x) \sqcap (x \oplus z)\big)
        + h\big((y \sqcap x) \sqcap (x \oplus z)\big) \nn. 
  \end{align}
  But since $h\big(x \sqcap (x \oplus z)\big) = 0$ via \lem{L6} and therefore $h\big((y \sqcap x) \sqcap (x \oplus z)\big) = 0$, then we have 
  \begin{align}
    h\big(y \sqcap (x \oplus z) \big) = h\big((y \ssm x) \sqcap (x \oplus z)\big)  \nn
  \end{align}
  as required.
\end{proof}

\begin{lemma}
  \label{lem:L8}
  We have the following identity, 
  \begin{equation}
    h\big((x \oplus y) \sqcap (x, z) \sqcap (y,z) \big) = h\big(z \sqcap (x \oplus y) \big) + h\big((x \oplus y) \sqcap (x \oplus z) \sqcap (y \oplus z) \big).
  \end{equation}
\end{lemma}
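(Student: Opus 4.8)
The plan is to mimic the strategy used for \lem{L2} and \lem{L3}: expand one of the joint sources on the left-hand side using the bivariate decomposition \eq{bivar_decomp}, distribute the intersection information content over the resulting disjoint components, and then discard the components that meet the synergistic factor $x \oplus y$ trivially. By the $x \leftrightarrow y$ symmetry of the statement (using commutativity of $\oplus$ and $\sqcap$) I may assume $h(x) \geq h(y)$; the position of $h(z)$ in the total order guaranteed by connexity then produces a mild case split that I will track but which does not change the argument.

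First I would decompose the source $(y,z)$ by \eq{bivar_decomp} into the four disjoint components $y \sqcap z$, $y \ssm z$, $z \ssm y$ and $y \oplus z$. Since these components partition $(y,z)$ and the intersection information content distributes over this partition additively (this is the step I would justify most carefully, exactly as in the inclusion--exclusion passage around \eq{pie_union} and the identity $\min(a+p,a+q)=a+\min(p,q)$ used implicitly in \eq{max_min_two}), the target splits as
\[
  h\big((x \oplus y) \sqcap (x,z) \sqcap (y,z)\big)
    = \sum_{P} h\big((x \oplus y) \sqcap (x,z) \sqcap P\big),
    \qquad P \in \{\, y \sqcap z,\ y \ssm z,\ z \ssm y,\ y \oplus z \,\}.
\]
I would then evaluate the four terms. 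For $P = y \sqcap z$ together with $P = z \ssm y$, the factor $(x,z)$ is absorbed because $y \sqcap z \preceq z \preceq (x,z)$ and $z \ssm y \preceq z \preceq (x,z)$ by \eq{capWithJoint}; using the idempotence/absorption bookkeeping these two contributions recombine into $h\big(z \sqcap (x \oplus y)\big)$ whichever of $h(y),h(z)$ is larger, giving the first term on the right-hand side. The term $P = y \ssm z$ vanishes: since $y \ssm z \preceq y$ and $h\big(y \sqcap (x \oplus y)\big)=0$ by the \lem{L4}/\lem{L6} absorption argument, monotonicity of $\sqcap$ forces $h\big((x \oplus y) \sqcap (x,z) \sqcap (y \ssm z)\big)=0$.

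For the remaining term $P = y \oplus z$ I would reduce $(x,z)$ to its synergistic part by decomposing it via \eq{bivar_decomp} into $x \sqcap z$, $x \ssm z$, $z \ssm x$ and $x \oplus z$ and distributing again: the $x \sqcap z$ piece dies through $h\big(z \sqcap (y \oplus z)\big)=0$ (an \lem{L6} relabelling), the $z \ssm x$ piece is empty under $h(x)\geq h(z)$, and the $x \ssm z$ piece dies through $h\big((x \oplus y) \sqcap (x \ssm z)\big)=0$ (again \lem{L6}, since $x \ssm z \preceq x$ and $x \oplus y$ meets $x$ trivially), leaving only $h\big((x \oplus y) \sqcap (x \oplus z) \sqcap (y \oplus z)\big)$, the second right-hand term. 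Collecting the survivors yields the identity. The \textbf{main obstacle} is precisely these reduction steps: the naive reading ``$\sqcap$ equals a pointwise minimum'' fails for the derived unique and synergistic contents (as \lem{L4} already shows), so each vanishing fact must instead be argued through the absorption identity \eq{capWithJoint} and the idempotence/absorption lemmas in the style of \lem{L4}, \lem{L6} and \lem{L7}; the secondary nuisance is the connexity case split, which only affects how the $y \sqcap z$ and $z \ssm y$ pieces are apportioned into $h\big(z \sqcap (x \oplus y)\big)$.
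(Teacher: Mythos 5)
Your proposal is sound and is essentially the paper's own proof in different clothing: the paper likewise proceeds by iterated decomposition of the joint sources, additive distribution of $\sqcap$ over the disjoint components, and elimination of terms by the absorption lemmas. Concretely, the paper splits $(x,z)$ first, into the three parts $x$, $z \ssm x$ and $x \oplus z$ (killing the $x$ part with \lem{L6}), then splits $(y,z)$ into $y$, $z \ssm y$ and $y \oplus z$ inside the two survivors, kills four of the six resulting terms with \lem{L6}, and merges the remaining fragment $h\big((x \oplus y) \sqcap (z \ssm x) \sqcap (z \ssm y)\big)$ into $h\big((x \oplus y) \sqcap z\big)$ via \lem{L7}; you split $(y,z)$ first using the four-part decomposition \eq{bivar_decomp} and arrive at the same two surviving atoms. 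Your caution about the additive-distribution step is fair, but note that the paper's own proofs of Lemmas~\ref{lem:L7}--\ref{lem:L9} use it in exactly the same unproved form, so you are on equal footing there.

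Two points need tightening. First, your WLOG is not as innocuous as you claim: you dispose of the $z \ssm x$ piece of the $P = y \oplus z$ term by assuming $h(x) \geq h(z)$, yet you assert that the connexity case split affects only the apportionment between the $y \sqcap z$ and $z \ssm y$ pieces, so the ordering $h(z) > h(x)$ is left unhandled for that piece. The fix is the very device you deploy in the adjacent clause: $z \ssm x \preceq z$ and $h\big(z \sqcap (y \oplus z)\big) = 0$ by \lem{L6} relabelled, so the piece vanishes unconditionally; with this, the WLOG and the case split can be deleted altogether, matching the paper's order-free argument. Second, the $y \sqcap z$ piece never shares in the first right-hand term: it is identically zero, since $y \sqcap z \preceq y$ and $h\big(y \sqcap (x \oplus y)\big) = 0$, and all of $h\big(z \sqcap (x \oplus y)\big)$ comes from the $z \ssm y$ piece alone, obtained by absorbing the $(x,z)$ factor in the style of \lem{L5} and \eq{capWithJoint} and then applying \lem{L7}. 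Neither issue changes the outcome, but as written your "recombination" sentence misattributes the contributions and leaves one ordering uncovered.
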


\begin{proof}
  We have that
  \begin{align}
    h\big((x \oplus y) \sqcap (x, z) \sqcap (y,z) \big)
      &= h\big((x \oplus y) \sqcap x \sqcap (y,z) \big)
        + h\big((x \oplus y) \sqcap (z \ssm x) \sqcap (y,z) \big) \nn\\
      &\qquad + h\big((x \oplus y) \sqcap (x \oplus z) \sqcap (y,z) \big). 
  \end{align}
  Then, by using \lem{L6}, we get that
  \begin{align}
    h\big((x \oplus y) \sqcap (x, z) \sqcap (y,z) \big)
      &= h\big((x \oplus y) \sqcap (z \ssm x) \sqcap (y,z) \big)
        + h\big((x \oplus y) \sqcap (x \oplus z) \sqcap (y,z) \big) \nn \\
      &= h\big((x \oplus y) \sqcap (z \ssm x) \sqcap y \big)
        + h\big((x \oplus y) \sqcap (z \ssm x) \sqcap (z \ssm y) \big) \nn\\
        &\qquad + h\big((x \oplus y) \sqcap (z \ssm x) \sqcap (y \oplus z) \big)
        + h\big((x \oplus y) \sqcap (x \oplus z) \sqcap y \big) \nn\\
        &\qquad + h\big((x \oplus y) \sqcap (x \oplus z) \sqcap (z \ssm y) \big)
        + h\big((x \oplus y) \sqcap (x \oplus z) \sqcap (y \oplus z) \big) \nn\\
      &= h\big((x \oplus y) \sqcap (z \ssm x) \sqcap (z \ssm y)
        + h\big((x \oplus y) \sqcap (x \oplus z) \sqcap (y \oplus z) \big),
  \end{align}
  where we have used \lem{L6} four more times.  Finally, using \lem{L7}, we get that
  \begin{align}
    h\big((x \oplus y) \sqcap (x, z) \sqcap (y,z) \big)
      &= h\big((x \oplus y) \sqcap z \big)
        + h\big((x \oplus y) \sqcap (x \oplus z) \sqcap (y \oplus z) \big),
  \end{align}
  as required.
\end{proof}

\begin{lemma}
  \label{lem:L9}
  We have the following identity, 
  \begin{align}
    h\big((x, y) \sqcap (x, z) \sqcap (y,z) \big)
      &= h\big(x \sqcap (y,z) \big) + h\big(y \sqcap (x \oplus z) \big)
        + h\big(z \sqcap (x \oplus y) \big) + h\big((y \sqcap z) \ssm x \big) \nn\\
      &\qquad + h\big((x \oplus y) \sqcap (x \oplus z) \sqcap (y \oplus z) \big).
  \end{align}
\end{lemma}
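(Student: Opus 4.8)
The plan is to expand the outermost joint $(x,y)$ using the decomposition \eq{bivar_decomp} in its asymmetric ``base-$x$'' form $h(x,y)=h(x)+h(y\ssm x)+h(x\oplus y)$, and then distribute the meet with $(x,z)\sqcap(y,z)$ across the three resulting atoms, exactly as was done for $(x,z)$ in the proof of \lem{L8}. This gives
\begin{align}
  h\big((x,y)\sqcap(x,z)\sqcap(y,z)\big)
    &= h\big(x\sqcap(x,z)\sqcap(y,z)\big)
      + h\big((y\ssm x)\sqcap(x,z)\sqcap(y,z)\big) \nn\\
    &\quad + h\big((x\oplus y)\sqcap(x,z)\sqcap(y,z)\big).
\end{align}
The first atom collapses immediately: by \eq{capWithJoint} we have $h\big(x\sqcap(x,z)\big)=h(x)$, so the meet with $(x,z)$ is absorbed and the term reduces to $h\big(x\sqcap(y,z)\big)$, the first term on the right-hand side. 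The third atom is precisely the left-hand side of \lem{L8}, which immediately supplies $h\big(z\sqcap(x\oplus y)\big)+h\big((x\oplus y)\sqcap(x\oplus z)\sqcap(y\oplus z)\big)$, i.e.\ the third and fifth terms claimed.

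It then remains to identify the middle atom $h\big((y\ssm x)\sqcap(x,z)\sqcap(y,z)\big)$ with the two missing terms $h\big(y\sqcap(x\oplus z)\big)+h\big((y\sqcap z)\ssm x\big)$. First I would use \lem{L5}, which gives $h\big((y\ssm x)\sqcap(y,z)\big)=h(y\ssm x)$ and hence that $h(y\ssm x)$ does not exceed $h(y,z)$; this lets me drop the $(y,z)$ factor and rewrite the atom as $h\big((y\ssm x)\sqcap(x,z)\big)$. Expanding $(x,z)$ by the base-$x$ form of \eq{bivar_decomp} and distributing the meet with $(y\ssm x)$ produces three pieces: $h\big((y\ssm x)\sqcap x\big)$, which vanishes by \lem{L4}; $h\big((y\ssm x)\sqcap(z\ssm x)\big)$, which equals $h\big((y\sqcap z)\ssm x\big)$ by the relabelled \lem{L1}; and $h\big((y\ssm x)\sqcap(x\oplus z)\big)$, which equals $h\big(y\sqcap(x\oplus z)\big)$ directly by \lem{L7}. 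Summing the three atoms then yields exactly the five terms.

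The only delicate point is the repeated step of distributing a meet across the additive decomposition \eq{bivar_decomp} of a joint. This is not literal lattice distributivity but rather additivity over the disjoint regions into which \eq{bivar_decomp} splits a joint information content, and it is the same manoeuvre already used implicitly in \lem{L1}, \lem{L2} and \lem{L8}; I would simply invoke \eq{bivar_decomp} each time, as those proofs do. Everything else is bookkeeping: matching each atom to the correct term and reducing the triple meets to double meets via the absorption facts \eq{capWithJoint} and \lem{L5}.
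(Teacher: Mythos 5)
Your proposal is correct and follows essentially the same route as the paper's proof: the same base-$x$ expansion of $(x,y)$ and then of $(x,z)$ via \eq{bivar_decomp}, with the first atom absorbed through \eq{capWithJoint}, the middle atom reduced by \lem{L5} and then resolved by \lem{L4}, the relabelled \lem{L1} and \lem{L7}, and the third atom dispatched directly by \lem{L8}. Your explicit flagging of the meet-over-additive-decomposition step as needing justification is a point the paper leaves implicit, but otherwise the two arguments coincide step for step.
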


\begin{proof}
  We have that
  \begin{align}
    h\big((x, y) \sqcap (x, z) \sqcap (y,z) \big)
      &= h\big(x \sqcap (x, z) \sqcap (y,z) \big)
        + h\big((y \ssm x) \sqcap (x, z) \sqcap (y,z) \big) 
        + h\big((x \oplus y) \sqcap (x, z) \sqcap (y,z) \big) \nn\\
      &= h\big(x \sqcap (y,z) \big)
        + h\big((y \ssm x) \sqcap (x, z) \big) 
        + h\big((x \oplus y) \sqcap (x, z) \sqcap (y,z) \big),
  \end{align}
  where we have used \eq{capWithJoint} and \lem{L5}.  Next, we have that
  \begin{align}
    h\big((x, y) \sqcap (x, z) \sqcap (y,z) \big)
      &= h\big(x \sqcap (y,z) \big)
        + h\big((y \ssm x) \sqcap x \big) 
        + h\big((y \ssm x) \sqcap (z \ssm x) \big) 
        + h\big((y \ssm x) \sqcap (x \oplus z) \big) \nn\\
        &\qquad + h\big((x \oplus y) \sqcap (x, z) \sqcap (y,z) \big) \nn\\
      &= h\big(x \sqcap (y,z) \big)
        + 0 
        + h\big((y \sqcap z) \ssm x \big) 
        + h\big(y \sqcap (x \oplus z) \big) \nn\\
      &\qquad + h\big((x \oplus y) \sqcap (x, z) \sqcap (y,z) \big)
  \end{align}
  where we have used \lem{L1}, \lem{L4} and \lem{L7}.  Finally, we have that
  \begin{align}
    h\big((x, y) \sqcap (x, z) \sqcap (y,z) \big)
      &= h\big(x \sqcap (y,z) \big)
        + h\big((y \sqcap z) \ssm x \big) 
        + h\big(y \sqcap (x \oplus z) \big) 
        + h\big(z \sqcap (x \oplus y) \big) \nn\\
      &\qquad + h\big((x \oplus y) \sqcap (x \oplus z) \sqcap (y \oplus z) \big),
  \end{align}
  where we have used via \lem{L8} to get the required result.
\end{proof}

\end{document}